\numberwithin{equation}{section}
\newtheorem{thm}{Theorem}
\newtheorem{lemma}[thm]{Lemma}
\newtheorem{cor}[thm]{Corollary}
\theoremstyle{definition}
\numberwithin{thm}{section}
\def\R{\ensuremath\mathbb{R}}
\def\E{\ensuremath\mathbb{E}}
\def\Z{\ensuremath\mathbb{Z}}
\def\N{\ensuremath\mathbb{N}}
\def\eps{\ensuremath\varepsilon}
\newcommand{\balls}{m}
\newcommand{\bins}{n}
\newcommand{\trm}{\mu_0} 
\newcommand{\trp}{p_0}
\newcommand{\fbins}{|h(X)|}
\newcommand{\OO}{O}
\newcommand{\Olog}{\tilde{O}}
\newcommand{\var}{\mathop{\mathrm{Var}}}
\author{Anders Aamand and Mikkel Thorup}
\title{Non-Empty Bins with Simple Tabulation Hashing}
\begin{document}
\maketitle
\begin{abstract}


We consider the hashing of a set $X\subseteq U$ with $|X|=\balls$ using
a simple tabulation hash function $h:U\to [\bins]=\{0,\dots,\bins-1\}$ and analyse the
number of non-empty bins, that is, the size of $h(X)$. We show that the
expected size of $h(X)$ matches that with fully random hashing to within
low-order terms. We also provide concentration bounds.  The number of
non-empty bins is a fundamental measure in the balls and bins
paradigm, and it is critical in applications such as Bloom filters and
Filter hashing.  For example, normally Bloom filters are proportioned for a desired low
false-positive probability assuming fully random hashing (see \url{en.wikipedia.org/wiki/Bloom_filter}).
Our results imply that if we implement the hashing with simple tabulation, we obtain the same
low false-positive probability for any possible input.
\end{abstract}

\thispagestyle{empty}
\setcounter{page}0
\newpage\section{Introduction}

We consider the balls and bins paradigm where a set $X\subseteq U$ of
$|X|=\balls$ balls are distributed into a set of $\bins$ bins
according to a hash function $h:U \to [\bins]$. We are interested in questions relating to the distribution of $\fbins$, for example:
What is the expected number of non-empty bins? How well is $\fbins$ concentrated around its mean? And what is
the probability that a query ball lands in an empty bin?
These questions are critical in applications such as 
Bloom filters~\cite{Bloom} and Filter hashing~\cite{Fot}.

In the setting where $h$ is a fully random hash function, meaning that
the random variables $(h(x))_{x\in U}$ are mutually independent and
uniformly distributed in $[\bins]$, the situation is well
understood. The random distribution process is equivalent to throwing
$\balls$ balls sequentially into $\bins$ bins by for each ball
choosing a bin uniformly at random and independently of the placements
of the previous balls. The probability that a bin becomes empty is
thus $\left(1-1/\bins \right)^\balls$; so the expected number of
non-empty bins is exactly $\trm:=\bins (1-(1-1/\bins)^\balls)$ and,
unsurprisingly, the number of non-empty bins turns out to be sharply
concentrated around $\trm$ (see for example Kamath et al.~\cite{Kam}
for several such concentration results).

In practical applications fully random hashing is unrealistic and so
it is desirable to replace the fully random hash functions with
realistic and implementable hash functions that still provide at least
some of the probabilistic guarantees that were available in the fully
random setting. However, as the mutual independence of the keys is
often a key ingredient in proving results in the fully random setting
most of these proofs do not carry over. Often the results are simply no longer true and if
they are one has to come up with alternative techniques for proving
them.

In this paper, we study the number of non-empty bins when the hash
function $h$ is chosen to be a simple tabulation hash function
\cite{Pat,Zobrist}; which is very fast and easy to implement (see description
below in Section~\ref{sec:simple-tab}). We
provide estimates on the expected size of $\fbins$ which
asymptotically match\footnote{Here we use ``asymptotically'' in the
  classic mathematical sense to mean equal to within low order terms,
  not just within a constant factor.}  those with fully random hashing
on any possible input.  To get a similar match within the classic
$k$-independence paradigm~\cite{WC}, we would generally need
$k=\Omega((\log n)/(\log\log n))$. For comparison, simple tabulation is the
fastest known 3-independent hash function~\cite{Pat}. We will 
also study how $\fbins$ is concentrated around its mean.

Our results complements those from~\cite{Pat}, which show that with
simple tabulation hashing, we get Chernoff-type concentration on the
number of balls in a given bin when $\balls\gg \bins$.  For example,
the results from~\cite{Pat} imply that all bins are non-empty with high probability (whp) when $\balls=\omega(\bins\log\bins)$. More precisely, for any
 constant $\gamma>0$, there exists a $C>0$ such that if $\balls \geq C \bins \log \bins$,
  all bins are non-empty with probability  $1-\OO(\bins^{-\gamma})$.
 As a consequence, we only have to study
$\fbins$ for $\balls=O(\bins\log\bins)$ below. On the other hand,~\cite{Pat}
does not provide any good bounds on the probability that a bin is
non-empty when, say, $\balls=\bins$. In this case, our results imply
that a bin is non-empty with probability $1-1/e\pm o(1)$, as in the
fully random case. The understanding we provide here is critical to
applications such as Bloom filters~\cite{Bloom} and Filter
hashing~\cite{Fot}, which we describe in section~\ref{Bloomsec} and~\ref{filtsec}.

We want to emphasize the advantage of having many complementary
results for simple tabulation hashing. An obvious advantage is that
simple tabulation can be reused in many contexts, but there may also be applications
that need several strong properties to work in tandem. If, for example, an application
has to hash a mix of a few heavy balls and many light balls, and the
hash function do not know which is which, then the results from
\cite{Pat} give us the Chernoff-style concentration of the number of
light balls in a bin while the results of this paper give us the right probability
that a bin contains a heavy ball. For another example where an interplay of properties becomes important see section~\ref{filtsec} on Filter hashing.
The reader is referred to~\cite{Tho17} for
a survey of results known for simple tabulation hashing, as well as examples where
simple tabulation does not suffice and where slower more sophisticated hash functions
are needed.




\subsection{Simple tabulation hashing}\label{sec:simple-tab}
Recall that a hash function $h$ is a map from a universe $U$ to a range $R$ chosen with respect to some probability distribution on the set of all such functions. If the distribution is uniform (equivalently the random variables $(h(x))_{x\in U}$ are mutually independent and uniformly distributed in $R$) we will say that $h$ is fully random.

Simple tabulation was introduced by Zobrist~\cite{Zobrist}. For simple tabulation $U=[u]=\{0,\dots,u-1\}$ and $R=[2^r]$ for some $r\in \N$. The keys $x\in U$ are viewed as vectors $x=(x[0],\dots,x[c-1])$ of $c=\OO(1)$ characters with each $x[i]\in \Sigma:=[u^{1/c}]$. The simple tabulation hash function $h$ is defined by
\begin{align*}
h(x)=\bigoplus_{i\in [c]}h_i(x[i]),
\end{align*}
where $h_0,\dots,h_{c-1}:\Sigma \to R$ are independent fully random hash functions and where $\oplus$
denotes the bitwise XOR. What makes it fast is that the character
domains of $h_0,\dots,h_{c-1}$ are so small that they can be stored as
tables in fast cache. Experiments in~\cite{Pat} found that the
hashing of $32$-bit keys divided into $4$ $8$-bit characters was as
fast as two $64$-bit multiplications. Note that on machines with
larger cache, it may be faster to use $16$-bit characters.
As useful computations normally involve data and hence cache, there is no commercial drive for developing processors that do multiplications much faster than cache look-ups. Therefore, on real-world processors, we always expect cache based simple tabulation to be at least comparable in speed to multiplication. The converse is not true, since many useful computations do not involve multiplications. Thus there is a drive to 
make cache faster even if it is too hard/expensive to speed up multiplication circuits.

Other important properties include that the $c$ character table
lookups can be done in parallel and that when initialised the
character tables are not changed. For applications such as Bloom
filters where more than one hash function is needed another nice
property of simple tabulation is that the output bits are mutually
independent. Using $(kr)$-bit hash values is thus equivalent to using
$k$ independent simple tabulation hash functions each with values in
$[2^r]$. This means that we can get $k$ independent $r$-bit hash
values using only $c$ lookups of $(kr)$-bit strings.

\subsection{Main Results}
We will now present our results on the number of non-empty bins with simple tabulation hashing. 
\paragraph{The expected number of non-empty bins:}
Our first theorem compares the expected number of non-empty bins when
using simple tabulation to that in the fully random setting. We denote
by $\trp =1-\left(1-1/\bins \right)^{\balls}<\balls/\bins$ the probability that a
bin becomes non-empty and by $\trm=\bins \trp$ the expected number of
non-empty bins when $\balls$ balls are distributed into $\bins$ bins
using fully random hashing.

\begin{thm}\label{expthm}
Let $X\subseteq U$ be a fixed set of $|X|=\balls$ balls.  Let $y\in
[\bins]$ be any bin and suppose that $h:U\to [\bins]$ is a simple tabulation hash function. If $p$ denotes the probability that
$y\in h(X)$ then
\begin{align*}
|p-\trp|\leq \frac{\balls^{2-1/c}}{\bins^2} \quad \text{and hence} \quad \left|\E[\fbins]-\trm\right| \leq \frac{\balls^{2-1/c}}{\bins}.
\end{align*}
If we let $y$ depend on the hash of a distinguished query ball $q\in
U\backslash X$, e.g., $y=h(q)$, then the bound on $p$ above is replaced by the
weaker $|p-\trp|\leq \frac{2\balls^{2-1/c}}{\bins^2}$.

\end{thm}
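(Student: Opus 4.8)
The plan is as follows. Since replacing the table $h_0$ by $a\mapsto h_0(a)\oplus z$ is distribution preserving and shifts every hash value by $z$, the probability $p$ that $y\in h(X)$ is the same for all bins $y$; hence $\E[\fbins]=\bins p$ and $\trm=\bins\trp$, so the second inequality follows from the first by multiplying by $\bins$, and it suffices to bound $|q-q_0|$ where $q=\Pr[y\notin h(X)]$ and $q_0=(1-1/\bins)^{\balls}$. I will induct on the number of characters $c$; the base case $c=1$ is trivial since one-character simple tabulation is fully random. I may also assume $\balls^{2-1/c}/\bins^2\le1$, i.e.\ $\balls\le\bins^{2c/(2c-1)}\le\bins^2$, as the bound is otherwise vacuous.

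For the inductive step, condition on the first character table $h_0$ and set $h'=\bigoplus_{i=1}^{c-1}h_i$, a $(c-1)$-character simple tabulation hash on $\Sigma^{c-1}$. Grouping the keys by the value $a$ of their $0$-th character and letting $\tilde X_a\subseteq\Sigma^{c-1}$ be the (injective) projection of the $a$-th group, with $m_a=|\tilde X_a|$ and $\sum_a m_a=\balls$, the event $y\notin h(X)$ says that for every present $a$ the value $y\oplus h_0(a)$ is missed by $h'(\tilde X_a)$. As the $h_0(a)$ are independent, uniform, and independent of $h'$, taking the expectation over $h_0$ first gives
\begin{align*}
q=\E_{h'}\Bigl[\;\prod_{a}\bigl(1-|h'(\tilde X_a)|/\bins\bigr)\Bigr].
\end{align*}
Now compare the right-hand side to $q_0=(1-1/\bins)^{\balls}$ in two steps. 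First, replace each factor $Y_a:=1-|h'(\tilde X_a)|/\bins$ by its mean: by the inductive hypothesis applied to $\tilde X_a$, $\bigl|\E_{h'}[Y_a]-(1-1/\bins)^{m_a}\bigr|\le m_a^{2-1/(c-1)}/\bins^2$, so telescoping the product of numbers in $[0,1]$ and using $\sum_a m_a^{\alpha}\le\balls^{\alpha}$ ($\alpha\ge1$) together with $2-\tfrac1{c-1}\le2-\tfrac1c$ gives $\bigl|\prod_a\E_{h'}[Y_a]-q_0\bigr|\le\balls^{2-1/c}/\bins^2$. Second, there remains the dependence gap $\bigl|\E_{h'}[\prod_aY_a]-\prod_a\E_{h'}[Y_a]\bigr|$.

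The dependence gap is the real obstacle, since distinct $\tilde X_a$ may share coordinate tuples and all pass through the single shared $h'$, so the $Y_a$ are genuinely correlated. I would expand the gap into a sum of joint central moments of the $|h'(\tilde X_a)|$, showing $|\mathrm{Cov}(|h'(\tilde X_a)|,|h'(\tilde X_{a'})|)|=O(m_am_{a'}/\bins)$ and analogous bounds of order $\prod_jm_{a_j}/\bins^{k-1}$ on the $k$-th joint central moments; since $\balls\le\bins^2$ makes the successive contributions geometrically decreasing, this bounds the gap by $O(\balls^2/\bins^3)$, which is $O(\balls^{2-1/c}/\bins^2)$ because $\balls^{1/c}\le\balls^{1/2}\le\bins$. (Alternatively, one can run inclusion--exclusion over the bad events $\{h'(t)=y\oplus h_0(a)\}$, which telescopes exactly to $(1-1/\bins)^{\balls}$ when $h'$ is fully random, the simple-tabulation residual being a sum over families of keys with $\mathbb F_2$-dependent character-incidence vectors --- the smallest being the ``combinatorial rectangles'' on two characters --- whose count supplies the exponent $2-1/c$; careful bookkeeping then yields the stated constant $1$.) Finally, for the query-ball variant $y=h(q)$ with $q\notin X$, conditioning on the characters of $q$ imposes one $\mathbb F_2$-linear relation among the characters occurring in $X$; absorbing it by rerunning the argument with one character of $h$ ``pinned'' costs only a factor $2$, giving $2\balls^{2-1/c}/\bins^2$.
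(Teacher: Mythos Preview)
Your route is genuinely different from the paper's. The paper does \emph{not} induct on $c$ or condition on a whole character table. Instead it uses the P\v{a}tra\c{s}cu--Thorup ordering of the position characters (Lemma~\ref{orderlemma}), obtaining groups $G_i$ each of size at most $\balls^{1-1/c}$, and reveals $h$ one position character at a time. At step $i$ it adds ``dummy balls'' so that exactly $|G_i|$ bins get hit; this makes the probability that a fixed bin is hit exactly $1-\prod_i(1-|G_i|/\bins)$, which is sandwiched between $\trp$ and $\trp+\balls^{2-1/c}/\bins^2$ by elementary inequalities (Lemma~\ref{ineqlemma}). The correction for the dummies is at most $\E[C]/\bins\le \balls^{2-1/c}/(2\bins^2)$, where $C$ is the number of within-group collisions (Lemma~\ref{varlem}). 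This gives the constant $1$ directly, and the query-ball case comes from the second part of Lemma~\ref{orderlemma}, which doubles the group-size bound.

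Your decomposition $q=\E_{h'}\bigl[\prod_a(1-|h'(\tilde X_a)|/\bins)\bigr]$ is correct, and Step~1 (replace each factor by its mean via the inductive hypothesis, then telescope) is fine. The genuine gap is Step~2, the ``dependence gap'' $\bigl|\E_{h'}[\prod_aY_a]-\prod_a\E_{h'}[Y_a]\bigr|$. You assert $|\mathrm{Cov}(|h'(\tilde X_a)|,|h'(\tilde X_{a'})|)|=O(m_am_{a'}/\bins)$ and analogous higher-moment bounds, but for $c\ge3$ the inner hash $h'$ is itself $(c-1)$-character simple tabulation, and controlling (co)variances of $|h'(\cdot)|$ is a \emph{concentration} statement, not an expectation statement --- essentially the content of Theorems~\ref{concthm}--\ref{concthm2}. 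So Step~2 either is circular or requires an independent argument of comparable difficulty to the paper's proof. The inclusion--exclusion alternative reduces $q-q_0$ to a signed sum over $\mathbb F_2$-dependent subsets of $X$, but there can be exponentially many such subsets and you give no mechanism to bound the signed sum. Even granting your $O(\cdot)$ claims, Step~1 already contributes $\balls^{2-1/(c-1)}/\bins^2$, which for large $c$ is nearly $\balls^{2-1/c}/\bins^2$; you have not shown how the stated constant $1$ survives.

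Finally, the query-ball case is more delicate than your one-line remark suggests: conditioning on $h(q)=y$ makes $h_0(q_0)$ a deterministic function of $h'$, so for the group $a=q_0$ the shift is no longer independent of $h'$ and your product formula breaks for that factor. One can repair this (it becomes a $(c-1)$-character query-ball instance for $\tilde X_{q_0}$), but the ``factor $2$'' does not fall out automatically.
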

The last statement of the theorem is important in the application to Bloom filters where we  wish to upper bound the probability that $h(q)\in h(X)$ for a query ball $q\notin X$.

To show that the expected relative error
$\left|\E[\fbins]-\trm\right|/\trm$ is always small, we have to
complement Theorem~\ref{expthm} with the result from~\cite{Pat} that
all bins are full, whp, when $\balls\geq~C\bins\log\bins$ for some
large enough constant $C$. In particular, this implies 
$\left|\E[\fbins]-\trm\right|/\trm\leq 1/\bins$ when $\balls\geq C\bins\log\bins$.
The relative error from Theorem~\ref{expthm} is maximized when $\balls$ is
maximized, and with $\balls=C\bins\log\bins$, it
is bounded by $\frac{\balls^{2-1/c}}{\bins^2}=O((\log^2\bins)/\bins^{1/c})=\Olog (\bins^{-1/c})$. Thus we conclude:
\begin{cor}\label{expcor}
Let $X\subseteq U$ be a fixed sets of $|X|=\balls$ balls and
let $h:U\to [\bins]$ be a simple tabulation hash function. Then
$\left|\E[\fbins]-\trm\right|/\trm=\Olog (\bins^{-1/c})$.
\end{cor}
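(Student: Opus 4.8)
The plan is to dispose of the corollary by splitting the range of $\balls$ at $\balls = C\bins\log\bins$, where $C$ is the constant furnished by the concentration bound of~\cite{Pat} when the failure exponent there is taken to be $\gamma=1$.

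\emph{Dense range, $\balls\geq C\bins\log\bins$.} Here I would simply quote~\cite{Pat}: all $\bins$ bins are non-empty with probability $1-\OO(\bins^{-1})$, so, using $\fbins\leq\bins$ always, $\bins-\OO(1)\leq\E[\fbins]\leq\bins$. On the other hand $\trp=1-(1-1/\bins)^{\balls}\geq 1-e^{-\balls/\bins}\geq 1-\bins^{-C}$, whence $\bins-\OO(1)\leq\trm\leq\bins$ as well. Thus $|\E[\fbins]-\trm|=\OO(1)$ while $\trm=\Theta(\bins)$, giving relative error $\OO(1/\bins)$.

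\emph{Sparse range, $\balls<C\bins\log\bins$.} Here I would invoke the second estimate of Theorem~\ref{expthm}, $|\E[\fbins]-\trm|\leq\balls^{2-1/c}/\bins$, and divide by $\trm=\bins\trp$. The only real computation is a clean lower bound on $\trp$: from the Bonferroni bound $(1-1/\bins)^{\balls}\leq 1-\balls/\bins+\binom{\balls}{2}/\bins^2$ one gets $\trp\geq\tfrac12\,\balls/\bins$ when $\balls\leq\bins$, and $\trp\geq 1-(1-1/\bins)^{\bins}\geq 1-e^{-1}>\tfrac12$ when $\balls>\bins$, so in all cases $\trm=\bins\trp\geq\tfrac12\min(\balls,\bins)$. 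Substituting, the relative error is at most $2\balls^{2-1/c}/(\bins\min(\balls,\bins))$. If $\balls\leq\bins$ this is $2\balls^{1-1/c}/\bins\leq 2\bins^{-1/c}$; if $\bins<\balls<C\bins\log\bins$ it is $2\balls^{2-1/c}/\bins^2$, which is increasing in $\balls$ and hence maximized at the right endpoint, where it equals $2C^{2-1/c}\bins^{-1/c}(\log\bins)^{2-1/c}=\Olog(\bins^{-1/c})$.

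Taking the worse of the two ranges yields $\left|\E[\fbins]-\trm\right|/\trm=\Olog(\bins^{-1/c})$. I do not anticipate a genuine obstacle: all the analytic content lives in Theorem~\ref{expthm} and in the quoted bound of~\cite{Pat}, and what is left is the case split together with the elementary estimate on $\trp$. The one point worth making explicit is why the polylogarithmic factor enters: the additive error of Theorem~\ref{expthm} is increasing in $\balls$, so the largest relative error in the sparse range is attained at $\balls=\Theta(\bins\log\bins)$ --- precisely the largest $\balls$ not already covered by the deterministic ``all bins non-empty'' statement from~\cite{Pat} --- which is where the factor $(\log\bins)^{2-1/c}$, absorbed into $\Olog$, comes from.
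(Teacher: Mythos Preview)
Your proposal is correct and follows essentially the same approach as the paper's own argument preceding Corollary~\ref{expcor}: split at $\balls=C\bins\log\bins$, invoke~\cite{Pat} in the dense range, and use the additive bound from Theorem~\ref{expthm} in the sparse range, with the polylog arising at the crossover. Your treatment is in fact more explicit than the paper's, which asserts that the relative error is maximized when $\balls$ is maximized without spelling out the lower bound on $\trm$; your case split $\trm\geq\tfrac12\min(\balls,\bins)$ makes this transparent.
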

As discussed above, the high probability bound from~\cite{Pat} takes over
when the bounds from Theorem~\ref{expthm} get weaker. This is because
the analysis in this paper is of a very different
nature than that in~\cite{Pat}.

\paragraph{Concentration of the number of non-empty bins:}
We now consider the concentration of $\fbins$ around its mean. 
In the fully random setting it was shown by Kamath et al.~\cite{Kam} that the concentration of $\fbins$ around $\mu_0$ is sharp: For any $\lambda\geq 0$ it holds that 
\begin{align*}
\Pr(|\fbins-\mu_0 |\geq \lambda)\leq 2\exp \left(-\frac{\lambda^2(\bins-1/2)}{\trm (2n-\trm)} \right)\leq 2 \exp \left(-\frac{\lambda^2}{2\trm} \right),
\end{align*}
which for example yields that $\fbins=\trm\pm \OO(\sqrt{\mu_0\log \bins})$ whp, that is, with probability $1-\OO(\bins^{-\gamma}$) for any choice of $\gamma=O(1)$. Unfortunately we cannot hope to obtain such a good concentration using simple tabulation hashing. To see this, consider the set of keys $[2]^\ell\times [\balls/2^\ell]$ for any constant $\ell$, e.g. $\ell=1$, and let $\mathcal{E}$ be the event that $h_i(0)=h_i(1)$ for $i=0,\dots,\ell-1$. This event occurs with probability $1/\bins^{\ell}$. Now if $\mathcal{E}$ occurs then the keys of $X_i=[2]^{\ell} \times \{i\}$ all hash to the same value namely $h_0(0)\oplus \cdots \oplus h_{\ell-1}(0) \oplus h_\ell(i)$. Furthermore, these values are independently and uniformly distributed in $[\bins]$ for $i\in [\balls/2^{\ell}]$ so the distribution of $\fbins$ becomes identical to the distribution of non-empty bins when $\balls/2^{\ell}$ balls are thrown into $\bins$ bins using truly random hashing. This observation ruins the hope of obtaining a sharp concentration around $\mu_0$ and shows that the lower bound in the theorem below is best possible being the expected number of non-empty bins when $\Omega(\balls)$ balls are distributed into $\bins$ bins. 
\begin{thm}\label{hpthm}
Let $X\subseteq U$ be a fixed sets of $|X|=\balls$ keys.  Let $h:U\to [\bins]$ be a simple tabulation hash function. Then whp
\begin{align*}
\fbins \geq \bins \left(1-\left(1-\frac{1}{\bins}\right)^{\Omega(\balls)}\right) 
\end{align*}
\end{thm}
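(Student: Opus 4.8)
The plan is to induct on the number of characters $c$. In the base case $c=1$ the map $h=h_0$ is fully random on the (distinct) characters of $X$, so $\fbins$ is exactly the number of non-empty bins in a truly random experiment and the sharp concentration of Kamath et al.~\cite{Kam} around $\trm$ applies; together with the bound from~\cite{Pat} that all bins are full whp once $\balls\geq C\bins\log\bins$, this gives $\fbins\geq\bins\bigl(1-(1-1/\bins)^{\balls/2}\bigr)$ whp, where the slack in the exponent (a factor $1/2$, and a factor $2^{-\OO(c)}$ after the full recursion) absorbs the two extremes $\balls=\OO(1)$ and $\balls=\Omega(\bins\log\bins)$. In particular we may always assume $\balls=\OO(\bins\log\bins)$, so that the number of non-empty fibers in the decomposition below is polynomially bounded in $\bins$ and union bounds are affordable.

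For the inductive step, write $h(x)=g(x[0],\dots,x[c-2])\oplus h_{c-1}(x[c-1])$ with $g=\bigoplus_{i<c-1}h_i$ a simple tabulation hash function on $c-1$ characters, and partition $X$ by the last character: for $a\in\Sigma$ put $X_a=\{x\in X:\ x[c-1]=a\}$ and let $\bar X_a$ be the set of (distinct) prefixes $x[0],\dots,x[c-2]$ of the keys in $X_a$. Now condition on $h_0,\dots,h_{c-2}$, i.e.\ on $g$. By the inductive hypothesis applied to each fixed set $\bar X_a$, together with a union bound over the at most $\min(|\Sigma|,\balls)$ non-empty fibers, we get that whp $s_a:=|g(\bar X_a)|\geq\bins\bigl(1-(1-1/\bins)^{\alpha_{c-1}|X_a|}\bigr)$ for all $a$ simultaneously, where $\alpha_{c-1}>0$ is the (constant) coefficient furnished by the previous level. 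Fix such a good outcome and set $B_a=g(\bar X_a)$, a now-fixed set of size $s_a$. Since XOR by a constant is a bijection, $h(X_a)=h_{c-1}(a)\oplus B_a$, so
\[
h(X)=\bigcup_{a}\bigl(h_{c-1}(a)\oplus B_a\bigr)
\]
is a union of independent, uniformly random translates of the fixed sets $B_a$, the only remaining randomness being that of $h_{c-1}$. The single combinatorial fact we will feed into the next step is $\sum_a s_a\leq\sum_a|\bar X_a|=\sum_a|X_a|=\balls$.

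It remains to show that such a union of independent random translates covers $\bins\bigl(1-(1-1/\bins)^{\alpha_c\balls}\bigr)$ bins whp, for a suitable $\alpha_c=\Omega(\alpha_{c-1})$. If some $s_a$ already exceeds this target we are done deterministically, since $h(X_a)\subseteq h(X)$ and $|h(X_a)|=s_a$; so assume every $s_a$ is below the target. The complementary count $N=\bins-\fbins=|\{y:\ y\notin h_{c-1}(a)\oplus B_a\ \text{for all }a\}|$ satisfies $\E[N]=\bins\prod_a(1-s_a/\bins)$, and revealing the shifts $h_{c-1}(a)$ one at a time exhibits $N$ as the terminal value of a Doob martingale whose $a$-th increment is bounded by $s_a$. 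An Azuma bound then gives $N\leq\E[N]+\lambda$ whp whenever $\lambda$ is a constant fraction of $\E[\fbins]$ and $\sum_a s_a^2$ is sufficiently small relative to $\lambda^2/\log\bins$, which follows from $\sum_a s_a\leq\balls$ once the few fibers with $s_a$ comparable to $\balls$ have been set aside. Because $\prod_a(1-s_a/\bins)\leq(1-1/\bins)^{\Omega(\sum_a s_a)}$ while the inductive lower bounds force $\sum_a s_a=\Omega(\alpha_{c-1}\min(\balls,\bins))$, this yields $\fbins\geq\bins\bigl(1-(1-1/\bins)^{\alpha_c\balls}\bigr)$ whp with, say, $\alpha_c=\tfrac12\alpha_{c-1}$. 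Since $c=\OO(1)$, $\alpha_c$ is a positive constant, which is the claimed bound; and the construction preceding the theorem shows the $\Omega(\balls)$ in the exponent cannot be improved.

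The heart of the argument is the last step: getting the concentration of the union to hold with probability $1-\OO(\bins^{-\gamma})$ for \emph{every} constant $\gamma$, rather than just with constant probability. A single fiber with $s_a$ comparable to $\balls$ (equivalently to $\bins$, in the dense regime) has too large an effect on $N$ for a naive bounded-differences inequality; such heavy fibers must be peeled off and handled deterministically --- each covers $s_a$ bins by itself --- leaving only the light fibers for the martingale, and one must check that enough weight $\sum_a s_a$ survives the peeling. The small end, $\balls$ polylogarithmic in $\bins$, is also delicate, since there $\E[\fbins]$ and the martingale's fluctuations are both small; there one argues instead that $\fbins\geq\balls-(\text{number of colliding pairs of }X)$ and uses that the number of colliding pairs under simple tabulation has Poisson-type upper-tail bounds. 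Balancing these regimes against the inductive lower bounds on the $s_a$ is where most of the technical work goes.
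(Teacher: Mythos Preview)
Your approach is genuinely different from the paper's, but it has a real gap precisely where you flag it, and the gap is not closed by the ``peel heavy fibers'' idea. Take $c=2$ and $X=A\times[k]$ with $|A|=m/k$ and $m=\Theta(n)$. Decomposing by the last character gives $k$ fibers, all sharing the prefix set $A$, hence all $B_a=h_0(A)$ equal a common set $B$ with $s_a=s\le m/k$. The single-fiber deterministic bound yields only $\fbins\ge s$, which via the inductive hypothesis gives exponent $\alpha_{c-1}m/k$; for this to meet the target you would need $\alpha_c\le\alpha_{c-1}/k$, impossible for a fixed constant once $k$ grows. On the other side, $\sum_a s_a^2\le k(m/k)^2=m^2/k$, so Azuma over the $k$ shifts gives a deviation of order $m\sqrt{(\gamma\log n)/k}$, useless unless $k\gtrsim\gamma\log n$. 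Thus for any proposed constant $\alpha_c$ and any $\gamma$ there is an entire range $k\in(\alpha_{c-1}/\alpha_c,\Theta_\gamma(\log n))$ where neither mechanism delivers the bound; if all $k$ fibers are ``heavy'' your peeling leaves nothing for the martingale and still only the max-fiber bound to fall back on. Moreover, after conditioning on $g$ your argument must work for \emph{arbitrary} $B_a$ of the given sizes, and for adversarial $B_a$ (say all equal to a subgroup) the union of $O(1)$ random XOR-translates genuinely has constant-probability deviations of order $n$.

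The paper's proof avoids this by not slicing along a single character. It uses the P\v{a}tra\c{s}cu--Thorup ordering of position characters (Lemma~\ref{orderlemma}), which produces groups $G_1,\dots,G_k$ with the uniform bound $|G_i|\le m^{1-1/c}$; this is exactly the balance your last-character fibers lack, and it gives $\sum_i|G_i|^2\le m^{2-1/c}$ so that Azuma yields a deviation $O(m^{1-1/(2c)}\sqrt{\log n})=o(m)$. To make the martingale's expectation computable the paper also needs a lower bound on how many bins each group actually hits: this comes from the $d$-boundedness lemma (Lemma~\ref{dboundedlemma}), which says that whp every $G_i$ occupies at least $|G_i|/d$ distinct bins for a constant $d$. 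Dummy balls then normalise each step to hit exactly $\lceil|G_i|/d\rceil$ bins, giving $\E[H]\ge n(1-(1-1/n)^{m/d})$; and on the high-probability $d$-bounded event no dummies are \emph{added}, so $\fbins\ge H$. A short case split (Lemma~\ref{dboundedlemma2} when $m\le\sqrt n$; calculus when $m\ge n$) finishes. The ordering lemma is precisely the missing ingredient that controls $\sum s_a^2$; your induction on $c$ would need a substitute for it, and none is given.
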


As argued above, the lower bound in Theorem~\ref{hpthm} is optimal. Settling with a laxer requirement than high probability, it turns out however that $\fbins$ is somewhat concentrated around $\mu_0$. This is the content of the following theorem which also provides a high probability upper bound on $\fbins$.

\begin{thm}\label{concthm}
Let $X\subseteq U$ be a fixed sets of $|X|=\balls$ keys.  Let $h:U\to [\bins]$ be a random simple tabulation hash function. For  $t\geq 0$ it holds that
\begin{align}
&\Pr\left[\fbins\geq \mu_0+2t\right]=\OO\left( \exp\left(\frac{-t^2}{2m^{2-1/c}} \right) \right), &\text{and} \label{cbu}\\
&\Pr\left[\fbins\leq \mu_0-2t\right]=\OO\left( \exp\left(\frac{-t^2}{2\balls^{2-1/c}} \right)+\frac{\balls^2}{\bins t^2} \right). \label{cbl}&
\end{align}
\end{thm}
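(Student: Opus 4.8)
The plan is to decompose the hash function character by character and track how revealing the tables $h_0,\dots,h_{c-1}$ one at a time changes the conditional expectation of $\fbins$, i.e.\ to run a martingale argument on the Doob sequence $Z_j=\E[\fbins\mid h_0,\dots,h_{j-1}]$ with $Z_0=\E[\fbins]$ and $Z_c=\fbins$. The subtlety, which forces the two-sided asymmetry in the statement, is that a single character table $h_j$ assigns values to all $|\Sigma|$ characters simultaneously, and changing one entry $h_j(a)$ can move \emph{many} balls at once — namely all $x\in X$ with $x[j]=a$ — so the naive bounded-difference (McDiarmid) estimate would lose a factor of $|\Sigma|$. The standard fix from the simple-tabulation toolkit is to expose the entries of each $h_j$ one character at a time as well, and to observe that the effect of fixing $h_j(a)$ is controlled by the number of keys $X_{j,a}=\{x\in X: x[j]=a\}$ landing on that character. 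First I would set up this finer filtration and write $\fbins-\E[\fbins]$ as a sum of martingale differences indexed by pairs $(j,a)$.

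Next I would bound each difference. Revealing $h_j(a)$ can only change $\fbins$ by affecting bins hit by keys in $X_{j,a}$, so $|Z_{(j,a)}-Z_{(j,a)-1}|\le \min(|X_{j,a}|,\text{something})$; more usefully, I expect the relevant quantity to be the number of \emph{new} non-empty bins that keys of $X_{j,a}$ can create, which on average is small but in the worst case is $|X_{j,a}|$. The key combinatorial input is that $\sum_{j,a}|X_{j,a}|^2 = \sum_j \sum_a |X_{j,a}|^2$, and for each coordinate $j$ this is exactly the number of ordered pairs $(x,y)\in X^2$ with $x[j]=y[j]$; by a standard argument (the worst case being when $X$ is a combinatorial box, as in the paper's own lower-bound example) this is $O(\balls^{2-1/c})$. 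Plugging the sum of squared differences $O(\balls^{2-1/c})$ into Azuma--Hoeffding yields the upper-tail bound~\eqref{cbu} with the claimed variance proxy $\balls^{2-1/c}$, and the factor $2t$ rather than $t$ absorbs a low-order correction coming from replacing $\E[\fbins]$ by $\mu_0$ via Theorem~\ref{expthm} (whose error $\balls^{2-1/c}/\bins$ is dominated once $t$ is in the meaningful range).

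For the lower tail~\eqref{cbl} the martingale bound alone also gives the $\exp(-t^2/(2\balls^{2-1/c}))$ term, but it cannot rule out the bad event described before the theorem, where a low-probability collision among the first few characters collapses $X$ onto a much smaller effective set and depresses $\fbins$; that event has probability $\Theta(\bins^{-\ell})$ and is the source of the additive $\balls^2/(\bins t^2)$ term. So the approach for the lower tail is a union of two estimates: (i) the Azuma bound on the martingale, valid on the ``good'' event that no heavy collision among low-order characters occurs, and (ii) a crude Markov/second-moment estimate — using $\var(\fbins)$ or the pairwise collision count, which is $O(\balls^2/\bins)$ — to control the contribution of the bad event, giving the $\balls^2/(\bins t^2)$ tail. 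I expect the main obstacle to be \textbf{(a)} getting the bounded-difference estimates tight enough that the sum of squared differences is genuinely $O(\balls^{2-1/c})$ and not $O(\balls^{2})$ — this requires arguing that revealing $h_j(a)$ only perturbs $\fbins$ through the keys in $X_{j,a}$ and bounding the box-collision sum $\sum_{x,y\in X}\Pr[x[j]=y[j]\text{ for the exposed coords}]$ correctly — and \textbf{(b)} cleanly isolating the bad low-order-collision event so that conditioning on its complement still leaves a martingale with the good difference bounds. The second-moment piece for~\eqref{cbl} should be routine given the pairwise-collision bound $\sum_{x\ne y}\Pr[h(x)=h(y)]=O(\balls^2/\bins)$.
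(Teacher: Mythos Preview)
Your central combinatorial claim is false, and this breaks the whole argument. You assert that $\sum_{j,a}|X_{j,a}|^2=O(\balls^{2-1/c})$, where $X_{j,a}=\{x\in X:x[j]=a\}$, with the worst case supposedly being a box. Take $X=\{0\}^{c-1}\times[\balls]$: then $|X_{0,0}|=\balls$, so already $\sum_a|X_{0,a}|^2=\balls^2$, not $\balls^{2-1/c}$. In general $\sum_a|X_{j,a}|^2$ counts ordered pairs $(x,y)\in X^2$ with $x[j]=y[j]$, and nothing prevents this from being $\Theta(\balls^2)$. So the naive per-entry McDiarmid filtration gives Azuma with variance proxy $\Theta(\balls^2)$, which is useless here.

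The paper's fix has two ingredients you are missing. First, it uses the P\v{a}tra\c{s}cu--Thorup ordering (Lemma~\ref{orderlemma}) so that the \emph{groups} $G_\alpha$ (keys whose last position character in the order is~$\alpha$) satisfy $|G_\alpha|\le \balls^{1-1/c}$; since $\sum_\alpha|G_\alpha|=\balls$, Lemma~\ref{ineqlemma} then gives $\sum_\alpha|G_\alpha|^2\le \balls^{2-1/c}$. This is \emph{not} the same as your $X_{j,a}$ decomposition. Second, even with the right ordering, the Doob martingale for $\fbins$ itself does \emph{not} have $|G_i|$-bounded increments: revealing a single $h(\alpha_i)$ can cause many internal collisions in later groups (the $[2]\times[\balls/2]$ example again), swinging the conditional expectation by far more than $|G_i|$. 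The paper therefore introduces dummy balls so that each group hits exactly $|G_i|$ bins, works with the surrogate count $H\ge\fbins$, proves $|H_i-H_{i-1}|\le|G_i|$ for \emph{that} martingale, and then for the lower tail bounds the gap $H-\fbins$ by the internal-collision count $C$ and applies Chebyshev using the variance bound $\var[C]=O(\balls^2/\bins)$ of Lemma~\ref{varlem}. Your ``bad event'' heuristic for the $\balls^2/(\bins t^2)$ term is in the right spirit, but the actual mechanism is this collision-count argument, not a separate conditioning step.
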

The term $\balls^2/(\bins t^2)$ in the second bound in the theorem may be unexpected but it has to be there (at least when $\balls=O(\bins)$) as we will argue after proving the theorem. 

Theorem~\ref{concthm} is proved using Azuma's inequality (which we will state and describe later). It turns out that when $\balls \ll \bins$ one can obtain stronger concentration using a stronger martingale inequality. For intuition, the reader is encouraged to think of the fully random setting where $\balls$ balls are thrown sequentially into $\bins$ bins independently and uniformly at random: In this setting the allocation of a single ball can change the conditionally expected number of non-empty bins by at most $1$ and this is the type of observation that normally suggests applying Azuma's inequality. However, when $\balls \ll \bins$, it is unlikely that the allocation of a ball will change the conditional expectation of the number of non-empty bins by much --- for that to happen the ball has to hit a bin that is already non-empty, and the probability that this occurs is at most $\balls/\bins \ll 1$. Using a martingale inequality by Mcdiarmid~\cite{Colin}, that takes the variance of our martingale into consideration, one can obtain the following result which is an improvement over Theorem~\ref{concthm} when $\balls \ll \bins$, and matches within $O$-notation when $\balls=\Theta(\bins)$.
\begin{thm}\label{concthm2}
Let $X\subseteq U$ be a fixed sets of $|X|=\balls$ keys.  Let $h:U\to [\bins]$ be a random simple tabulation hash function. Assume $\balls\leq \bins$. For  $t\geq 0$ it holds that
\begin{align}
&\Pr\left[\fbins\geq \mu_0+t\right]= \exp\left(-\Omega\left( \min \left\{\frac{t^2}{\frac{\balls^{3-1/c}}{\bins}}, \frac{t}{\balls^{1-1/c}}  \right\}\right) \right), &\text{and} \label{cbu2}\\
&\Pr\left[\fbins\leq \mu_0-t\right]=\exp\left(-\Omega\left( \min \left\{\frac{t^2}{\frac{\balls^{3-1/c}}{\bins}}, \frac{t}{\balls^{1-1/c}}  \right\}\right) \right)+\OO \left(\frac{\balls^2}{\bins t^2} \right). \label{cbl2}&
\end{align}
\end{thm}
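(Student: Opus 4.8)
The plan is to keep the martingale used to prove Theorem~\ref{concthm} and feed it into a variance-sensitive martingale inequality of McDiarmid~\cite{Colin} in place of Azuma's inequality. Recall that that martingale is the natural Doob martingale $Z_0=\E[\fbins],Z_1,\dots,Z_N=\fbins$ obtained by revealing the simple-tabulation table entries $h_i(\alpha)$ one position-character at a time (say position by position), and that the analysis of Theorem~\ref{concthm} already bounds its increments by $|Z_k-Z_{k-1}|=\OO(\balls^{1-1/c})$ --- deterministically for the upper tail, and on a high-probability event $\mathcal G$ for the lower tail, with $\bar{\mathcal G}$ returned to in the last paragraph. The inequality I will use states that a martingale with increments bounded by $M$ and with $\sum_k\var[Z_k-Z_{k-1}\mid\mathcal F_{k-1}]\le V$ satisfies $\Pr[Z_N-Z_0\ge t]\le\exp\!\left(-\Omega\!\left(\min\{t^2/V,\,t/M\}\right)\right)$, and it needs the variance bound only on the event whose probability is being estimated. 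So with $M=\OO(\balls^{1-1/c})$ in hand, the task reduces to proving $V=\OO(\balls^{3-1/c}/\bins)$; granting that and using $|\E[\fbins]-\trm|\le\balls^{2-1/c}/\bins$ from Theorem~\ref{expthm} to recenter at $\trm$ --- the discrepancy is $\le\balls^{1-1/c}$, hence harmless since for $t=\OO(\balls^{1-1/c})$ the claimed bounds are already $\Omega(1)$ --- we obtain~\eqref{cbu2} and the exponential term of~\eqref{cbl2}.

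Bounding the sum of conditional variances is the heart of the matter, and the step I expect to be the main obstacle. Revealing a single entry $h_i(\alpha)$ shifts the conditional expectation of $\fbins$ appreciably only when the group $X_{i,\alpha}=\{x\in X:x[i]=\alpha\}$ lands on bins that are already occupied, and \emph{provided the occupied set is well spread out} the conditional variance of that shift is only of order $|X_{i,\alpha}|\cdot\trm/\bins$ rather than the worst case $|X_{i,\alpha}|^2$. Summed over a balanced box $\Sigma_0\times\cdots\times\Sigma_{c-1}$ with $|\Sigma_i|=\balls^{1/c}$ this gives precisely $\OO(\balls^{3-1/c}/\bins)$, and the box turns out to be extremal. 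The difficulty is the ``well spread out'' hypothesis, which fails for degenerate inputs unless one argues recursively in the (constant) number of characters $c$: conditioning on $h_1,\dots,h_{c-1}$ turns the process into $|\Sigma_0|$ independent XOR-shifts $h_0(\alpha)$ of fixed shapes $Y_\alpha=\{\bigoplus_{i\ge1}h_i(x[i]):x\in X,\,x[0]=\alpha\}$, and the inductive hypothesis (the $(c-1)$-character case, bottoming out at fully random hashing) is used both to bound the sizes $|Y_\alpha|$ and to certify, on a high-probability event $\mathcal G$, that the partial unions $\bigcup_{\alpha'\le\alpha}(Y_{\alpha'}\oplus h_0(\alpha'))$ remain well spread out as they grow. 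On $\mathcal G$ the conditional variances then sum to $\OO(\balls^{3-1/c}/\bins)$ and McDiarmid's inequality applies. The resulting bound is not tight (for $\balls\ll\bins$ one can do better), but it suffices; essentially all the work is in this spreading estimate for degenerate $X$, where the non-independence and the XOR structure of simple tabulation bite hardest.

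Finally, the additive term $\OO(\balls^2/(\bins t^2))$ in~\eqref{cbl2} and the probability $\Pr[\bar{\mathcal G}]$ are both supplied by a single second-moment estimate. Write $R:=\balls-\fbins$ for the number of redundant keys (keys colliding with a distinct earlier key), so $0\le R\le C$ where $C$ is the number of unordered colliding pairs $\{x,x'\}\subseteq X$. Since simple tabulation is $3$-independent, $\E[C]=\binom{\balls}{2}/\bins$, and --- this is where $c=\OO(1)$ is essential --- the covariances $\mathrm{Cov}(\mathbf{1}[h(x)=h(x')],\mathbf{1}[h(y)=h(y')])$ vanish except for the ``combinatorial rectangle'' quadruples, of which there are $\OO(\balls^2)$; hence $\var[C]=\OO(\balls^2/\bins)$ and $\E[R^2]\le\E[C^2]=\OO(\balls^2/\bins)+(\binom{\balls}{2}/\bins)^2$. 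Because $\E[R]=\balls-\E[\fbins]=(\balls-\trm)\pm\balls^{2-1/c}/\bins$ by Theorem~\ref{expthm}, Chebyshev's inequality applied to $\{\fbins\le\trm-t\}=\{R\ge(\balls-\trm)+t\}$ gives $\Pr[\fbins\le\trm-t]=\OO(\balls^2/(\bins t^2))$ once $t$ is past the trivial threshold $\OO(\balls^{1-1/c})$; in the range where this estimate weakens (roughly $\balls\gg\sqrt{\bins}$) the exponential term already dominates. Taking $\mathcal G$ to include that $C$ is not much larger than its mean, the same computation bounds $\Pr[\bar{\mathcal G}]$, and combining everything gives~\eqref{cbl2}, and with it the theorem.
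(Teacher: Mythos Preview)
Your high-level strategy --- replace Azuma by McDiarmid's variance-sensitive inequality and bound $\sum_i\var[\,\cdot\mid\mathcal F_{i-1}]$ by $O(\balls^{3-1/c}/\bins)$ --- is exactly what the paper does. The gap is in \emph{which} martingale you run it on. You take the Doob martingale of $\fbins$ itself and assert that the analysis of Theorem~\ref{concthm} already gives $|Z_k-Z_{k-1}|=O(\balls^{1-1/c})$. That is not true for this martingale. Take $X=\{0,1\}\times[\balls/2]$, $c=2$, and any Lemma~\ref{orderlemma} ordering that places the two position-$0$ characters first (both have empty $G$-groups). Then $Z_1=Z_0$ by symmetry, but $Z_2=\E[\fbins\mid h_0(0),h_0(1)]$ drops to $\bins(1-(1-1/\bins)^{\balls/2})$ on the event $h_0(0)=h_0(1)$, a shift of $\Theta(\balls)$ when $\balls=\Theta(\bins)$. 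The point is that revealing $h(\alpha_i)$ affects not just the keys in $G_i$ but the \emph{internal clustering} of every later group containing $\alpha_i$, and that can move the conditional expectation of $\fbins$ by far more than $|G_i|$. Your proposed recursive-in-$c$ argument on a ``well spread out'' event $\mathcal G$ would have to absorb precisely these clustering swings; as written it is too vague, and your per-step variance $|X_{i,\alpha}|\cdot\trm/\bins$ does not even sum to $\balls^{3-1/c}/\bins$ over a balanced box (it sums to $O(\balls^2/\bins)$), so either the formula or the arithmetic is off.

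The paper avoids all of this by \emph{not} tracking $\fbins$ directly. Already in Theorem~\ref{concthm} it introduces dummy balls so that at step $i$ exactly $|G_i|$ distinct targets are XOR-shifted by $h(\alpha_i)$, and it runs the martingale on $H$, the number of bins hit by real-or-dummy balls. For this auxiliary process the increment bound $|H_i-H_{i-1}|\le|G_i|$ is deterministic by construction, and the variance bound is a two-line count: if $T_i$ is the number of step-$i$ targets landing in already-hit bins, then $\var[H_i\mid\mathcal F_{i-1}]\le\E[T_i^2\mid\mathcal F_{i-1}]\le|G_i|^2\cdot\balls/\bins$, so $\sum_i\sigma_i^2\le(\balls/\bins)\sum_i|G_i|^2\le\balls^{3-1/c}/\bins$ by Lemma~\ref{ineqlemma}. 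No recursion, no spreading hypothesis, no good event. The passage from $H$ back to $\fbins$ for the lower tail is then exactly a Chebyshev bound on the internal-collision count $C$ via Lemma~\ref{varlem}; your third paragraph is close to this, though applying Chebyshev to $C$ centred at $\E[C]$ is cleaner than going through $R\le C$ and $\E[R^2]\le\E[C^2]$, where the $(\E[C])^2$ term would otherwise cost you a factor of $\bins$.
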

The above bounds are unwieldy  so let us disentangle them. 
First, one can show using simple calculus that when $2\leq \balls\leq \bins$ then $\trm=\balls-\Theta(\balls^2/\bins)$. If $\balls^{1+1/c}=o(\bins)$ we thus have that $\trm=\balls-o(\balls^{1-1/c})$. To get a non-trivial bound from~\eqref{cbu2} we have to let $t=\Omega(\balls^{1-1/c})$ and then  $\trm+t=m+\omega(\balls^{1-1/c})$. This means that~\eqref{cbu2} is trivial when $\balls^{1+1/c}=o(\bins)$ as we can never have more than $\balls$ non-empty bins. For comparison,~\eqref{cbu} already becomes trivial when $\balls^{1+1/(2c)}=o( \bins)$.


 Suppose now that $\balls^{1+1/c}=\Omega( \bins)$. For a given $\delta$ put 
$$
t_0=\eta \max \left\{\sqrt{\frac{\balls^{3-1/c}}{\bins}\log \frac{1}{\delta}}, \balls^{1-1/c}\log \frac{1}{\delta} \right\},
$$
for some sufficiently large $\eta=O(1)$. Then~\eqref{cbu2} gives that $\Pr\left[\fbins\geq \mu_0+t_0\right]\leq \delta$. It remains to understand $t_0$:  Assuming that $\balls^{1+1/c}\geq \bins \log \frac{1}{\delta} $, we have that $t_0=\OO\left(\sqrt{\frac{\balls^{3-1/c}}{\bins}\log \frac{1}{\delta}}\right)$. For comparison, to get the same guarantee on the probability using~\eqref{cbu} we would have to put $t_0=\Omega\left(\sqrt{\balls^{2-1/c} \log \frac{1}{\delta}}\right)$, which is a factor of $\sqrt{\bins/\balls}$ larger.

Turning to~\eqref{cbl2}, it will typically in applications be the term $\OO \left(\frac{\balls^2}{\bins t^2} \right)$ that dominates the bound.  For a given $\delta$ we would choose $t=\max\{t_0,\balls/\sqrt{\bins \delta}\}$ to get $\Pr\left[\fbins\leq \mu_0-t\right]=\OO(\delta)$.

\subsection{Projecting into Arbitrary Ranges}\label{Projdessec}
Simple tabulation is an efficient hashing scheme for hashing into $r$-bit hash values. But what do we do if we want hash values in $[\bins]$ where $2^{r-1}<\bins<2^r$, say $\bins=3\times2^{r-2}$? Besides being of theoretical interest this is an important question in several practical applications. For example, when designing Bloom filters (which we will describe shortly), to minimize the false positive probability, we have to choose the size $\bins$ of the filters such that $\bins\approx \balls/\ln(2)$. When $\bins$ has to be a power of two, we may be up to a factor of $\sqrt{2}$ off, and this significantly affects the false positive probability. Another example is cuckoo hashing~\cite{Pagh}, which was shown in~\cite{Pat} to succeed with simple tabulation with probability $1-\OO(\bins^{-1/3})$ when $2\balls(1+\eps)\leq \bins$. If $\balls=2^r$ we have to choose $\bins$ as large as $2^{r+2}=4 \balls$ to apply this result, making it much less useful.

The way we remedy this is a standard trick, see e.g.~\cite{Thorup14}. We choose $r$ such that $2^r \gg \bins$, and hash in the first step to $r$-bit strings with a simple tabulation hash function $h:U\to [2^r]$. Usually $2^r\geq \bins^2$ suffices and then the entries of the character tables only becomes twice as long. Defining $s:[2^r]\to [\bins]$ by $s(y)=\lfloor y \bins/2^r \rfloor$ our combined hash function $U\to [\bins]$ is simply defined as $s \circ h$. Note that $s$ is very easy to compute since we do just one multiplication and since the division by $2^r$ is just an $r$-bit right shift. The only property we will use about $s$ is that it is \emph{most uniform} meaning that for $z\in [\bins]$ either, $|s^{-1}(\{z\})|=\lfloor \frac{2^r}{\bins} \rfloor$ or $|s^{-1}(\{z\})|=\lceil \frac{2^r}{\bins} \rceil$. For example, we could also use $s':[2^r]\to [\bins]$ defined by $s'(y)=y \pmod \bins$, but $s$ is much faster to compute. Note that if  $2^r\geq \bins^2$, then $\left| \frac{|s^{-1}(\{z\})|}{2^r}-\frac{1}{\bins} \right|\leq 2^{-r}\leq \bins^{-2}$.

\emph{A priori} it is not obvious that $s\circ h$ has the same good properties as ``normal'' simple tabulation. The set of bins can now be viewed as $\{s^{-1}(\{z\}):z \in [\bins]\}$, so each bin consists of many ``sub-bins'', and a result on the number of non-empty sub-bins does not translate directly to any useful result on the number of non-empty bins.
Nonetheless, many proofs of results for simple tabulation do not need  to be modified much in this new setting. For example, the simplified proof given by Aamand et al.~\cite{Aamand} of the result on cuckoo hashing from~\cite{Pat} can be checked to carry over to the case where the hash functions are implemented as described above if $r$ is sufficiently large. We provide no details here.

For the present paper the relevant analogue to Theorem~\ref{expthm} is the following:
\begin{thm}\label{expthm2}
Let $X\subseteq U$ be a fixed set of $|X|=\balls$ balls, and let $S\subseteq
[2^r]$ with $|S|/2^r=\rho$. Suppose $h:U\to [2^r]$ is a simple tabulation hash function. Define $\trp'=1-(1-\rho)^\balls$. If $p$ denotes the probability that
$h(X)\cap S\neq \emptyset$, then
\begin{align*}
|p-\trp'|\leq \balls^{2-1/c} \rho^2 
\end{align*}
If we let $S$ (and hence $\rho$) depend on the hash of a distinguished query ball $q\in
U\backslash X$, then the bound on $p$ above is replaced by the
weaker $|p-\trp|\leq 2\balls^{2-1/c}\rho^2$.
\end{thm}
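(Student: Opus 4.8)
The plan is to reduce the statement to Theorem~\ref{expthm} by a conditioning argument on the last character. Write each key $x = (x[0],\dots,x[c-1])$ and split off the last character: let $\Sigma_{<} = \Sigma^{c-1}$ denote the ``group keys'' $(x[0],\dots,x[c-2])$ and group the set $X$ into the fibers $X_a = \{x \in X : x[c-1] = a\}$ for $a \in \Sigma$. Condition on the values of $h_0,\dots,h_{c-2}$; this fixes a function $g : \Sigma^{c-1} \to [2^r]$, and the total hash value of a key $x \in X_a$ becomes $g(x[0..c-2]) \oplus h_{c-1}(a)$. The point is that after this conditioning the only remaining randomness is $h_{c-1}$, which is fully random, so we are back in a fully random ``balls into $\Sigma$'' type situation — but the structure we want to exploit is slightly different, so I would instead condition the other way, as the authors likely do: I would treat this as a statement about a simple tabulation function on the reduced universe and invoke Theorem~\ref{expthm} with a re-weighted notion of ``bin.''

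Here is the cleaner route. The key observation is that $h(X) \cap S \neq \emptyset$ iff some $x \in X$ has $h(x) \in S$, and since $S$ is an arbitrary subset of $[2^r]$ of density $\rho$, the indicator $\mathbf{1}[h(x) \in S]$ for a single key is a Bernoulli$(\rho)$ variable, exactly as $\mathbf{1}[h(x) = y]$ was Bernoulli$(1/\bins)$ in Theorem~\ref{expthm}. So I would re-run the proof of Theorem~\ref{expthm} verbatim with the substitution $1/\bins \mapsto \rho$, checking that every step used only (i) that individual keys hash uniformly, (ii) pairwise/higher independence structure of simple tabulation, and (iii) that the ``collision'' events being bounded are of the form $h(x) \in S$ and $h(x') \in S$ simultaneously. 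The first-order term $\trp' = 1 - (1-\rho)^\balls$ is what fully random hashing would give, and the error term in Theorem~\ref{expthm} was $\balls^{2-1/c}/\bins^2$, i.e. $\balls^{2-1/c} \cdot (1/\bins)^2$; under the substitution this becomes $\balls^{2-1/c}\rho^2$, matching the claim. The query-ball variant (letting $S$ depend on $h(q)$, e.g. $S = s^{-1}(\{s(h(q))\})$) again mirrors the corresponding part of Theorem~\ref{expthm}: conditioning on $h(q)$ fixes $S$ as a specific set of density $\le \lceil 2^r/\bins\rceil / 2^r$, but because $q$'s characters may coincide with those of keys in $X$, one loses a factor of $2$, exactly as before.

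The main obstacle is verifying that the proof of Theorem~\ref{expthm} really is agnostic to the specific ``target'' and depends only on its density. Concretely, the inclusion–exclusion / second-moment bookkeeping in that proof presumably expands $\Pr[h(X)\cap S = \emptyset]$ and controls the deviation from the fully random value via sums over pairs (or small tuples) of keys of quantities like $\Pr[h(x)\in S \wedge h(x')\in S] - \rho^2$; the decisive point is that for simple tabulation this ``excess correlation'' is controlled uniformly over all sets $S$ of a given density, not just over singletons or sub-cubes. This should follow because the relevant events factor through $h(x) \oplus h(x') = \bigoplus_{i \in D} (h_i(x[i]) \oplus h_i(x'[i]))$ where $D$ is the set of differing coordinates, and one bounds $\Pr[h(x) \in S, h(x') \in S]$ by a convexity/Fourier argument that is insensitive to the shape of $S$. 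I would flag this as the step to write out carefully; everything else is a mechanical substitution $1/\bins \mapsto \rho$ into the already-established Theorem~\ref{expthm}, and the statement about $\E[\fbins]$-analogues is not even needed since the theorem only asserts the single-target bound.
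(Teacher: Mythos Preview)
Your high-level instinct is right: the paper's proof of Theorem~\ref{expthm2} is indeed obtained by re-running the proof of Theorem~\ref{expthm} with the substitution $1/\bins \mapsto \rho$, and the query-ball variant again costs only the extra factor of $2$ coming from the second clause of Lemma~\ref{orderlemma}. So your overall plan is the correct one.

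However, your description of what that proof actually contains is off, and consequently the ``obstacle'' you flag is not the real one. The proof of Theorem~\ref{expthm} is not an inclusion--exclusion or second-moment expansion of $\Pr[h(X)\cap S=\emptyset]$, and no Fourier or convexity argument on $\Pr[h(x)\in S,\,h(x')\in S]-\rho^2$ is needed: by $2$-independence this quantity is already exactly zero. The mechanism is instead the position-character ordering of Lemma~\ref{orderlemma}: one reveals the $h(\alpha_i)$ sequentially, and at step $i$ the set $V_i=\{y:\exists x\in G_i,\ h(x\setminus\{\alpha_i\})\oplus y\in S\}$ of ``good'' values for $h(\alpha_i)$ has size at most $|G_i|\cdot |S|$. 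One then pads $V_i$ with a dummy set $D_i$ so that $|V_i\cup D_i|/2^r=\rho|G_i|$ exactly, defines $\mathcal H$ as the event that some $h(\alpha_i)\in V_i\cup D_i$, and gets $\Pr[\mathcal H]=1-\prod_i(1-\rho|G_i|)$, which is sandwiched between $\trp'$ and $\trp'+\balls^{2-1/c}\rho^2$ via Lemma~\ref{ineqlemma} and the inequality $(1+x/\ell)^\ell\geq e^x(1-x^2/\ell)$ exactly as before.

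The place where the argument genuinely changes from Theorem~\ref{expthm} is the lower bound, i.e.\ bounding $\Pr[\bigcup_i \{h(\alpha_i)\in D_i\}]$. In Theorem~\ref{expthm} the dummy count was bounded by the number of internal collisions in the groups and then by Lemma~\ref{varlem}. Here the defect $\rho|G_i|-|V_i|/2^r$ is not a collision count; instead one writes $\Pr[\mathcal D_i]=\rho|G_i|-\Pr[h(G_i)\cap S\neq\emptyset]$ and lower-bounds the latter by the Bonferroni inequality $\Pr[h(G_i)\cap S\neq\emptyset]\geq |G_i|\rho-\binom{|G_i|}{2}\rho^2$, using only $2$-independence (and $3$-independence in the conditioned query-ball case). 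Summing gives $\sum_i\Pr[\mathcal D_i]\leq \sum_i\binom{|G_i|}{2}\rho^2\leq \balls^{2-1/c}\rho^2$ by Lemma~\ref{ineqlemma}. This is the step you should have isolated; your anticipated difficulty about shape-dependence of $S$ simply does not arise.
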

If we assume $2^r\geq \bins^2$, say, and let $S=s^{-1}(\{z\})$ be a bin of $S\subset[2^r]$ we obtain the following estimate on $p$:
\begin{align*}
|p-\trp|&\leq |p-\trp'|+|\trp'-\trp| \\
 &\leq \balls^{2-1/c}\left(\frac{1}{n}+\frac{1}{2^r}\right)^2+\frac{\balls}{2^r}=\frac{\balls^{2-1/c}}{\bins^2}(1+o(1))
\end{align*}

This is very close to what is obtained from Theorem~\ref{expthm} and to make the difference smaller we can increase $r$ further.

There are also analogues of Theorem~\ref{hpthm},~\ref{concthm} and~\ref{concthm2} in which the bins are partitioned into groups of almost equal size and where the interest is in the number of groups that are hit by a ball. To avoid making this paper unnecessarily technical, we refrain from stating and proving these theorems, but in Section~\ref{Projsec} we will show how to modify the proof of Theorem~\ref{expthm} to obtain Theorem~\ref{expthm2}.

\subsection{Alternatives} 
One natural alternative to simple tabulation is to use $k$-independent
hashing~\cite{WC}. Using an easy variation\footnotemark of an inclusion-exclusion based argument by
Mitzenmacher and Vadhan~\cite{Vadhan} one can show that if $k$ is odd
and if $\balls \leq \bins$ the probability $p$ that a given bin is
non-empty satisfies
\begin{align}\label{kindineq}
\trp -\OO \left(\left( \frac{\balls}{\bins} \right)^k \frac{1}{k!}
\right)\leq p \leq \trp +\OO \left(\left( \frac{\balls}{\bins}
\right)^{k+1} \frac{1}{(k+1)!} \right),
\end{align}
and this is optimal, at least when $k$ is not too large, say $k=o
(\sqrt{\balls})$ --- there exist two (different) $k$-independent
families making respectively the upper and the lower bound tight for a
certain set of $\balls$ keys. A similar result holds when $k$ is even.
Although $p$ approaches $\trp$ when $k$ increases, for $k=\OO(1)$
and $\balls =\Omega(\bins)$, we have a deviation by an additive constant
term. In contrast, the probability that a bin is non-empty when using
simple tabulation is asymptotically
the same as in the fully random setting.

Another alternative when studying the number of non-empty bins is to assume that the input comes with a certain amount of randomness.  This was studied in~\cite{Vadhan} too and a slight variation\footnotemark[\value{footnote}] of their argument shows that if the input $X\subseteq U$ has enough entropy the probability that a bin is empty is asymptotically the same as in the fully random setting even if we only use $2$-independent hashing. This is essentially what we get with simple tabulation. However, our results have the advantage of holding for any input with no assumptions on its entropy. 
 \footnotetext{Mitzenmacher and Vadhan
  actually estimate the probability of getting a false positive when
  using $k$-independent hashing for Bloom filters, but this error
  probability is strongly related to the expected number of
  non-empty bins $\E[\fbins]$ (in the fully random setting it
  \emph{is} $\E[\fbins]/\bins$). Thus only a slight modification of
  their proof is needed.}
Now~\eqref{kindineq} also suggests the third alternative of looking
for highly independent hash functions. For the
expectation~\eqref{kindineq} shows that if $\balls \leq \bins$ we
would need $k=\Omega(\log \bins /\log \log \bins)$ to get guarantees
comparable to those obtained for simple tabulation. Such highly
independent hash functions were first studied by Siegel~\cite{Sieg},
the most efficient known construction today being the double
tabulation by Thorup~\cite{Mik} which gives independence
$u^{\Omega(1/c^2)}\gg \log\bins$ using space $O(cu^{1/c})$ and time $O(c)$. 
While this space and time matches that of simple tabulation within constant factors, it
is slower by at least an order of magnitude. As mentioned
in~\cite{Mik}, double tabulation with 32-bit keys divided
into 16-bit characters requires 11 times as many character table
lookups as with simple tabulation and we lose the same factor in
space. The larger space of double tabulation means that tables
may expand into much slower memory, possibly costing us another order of magnitude
in speed. 

There are several other types of hash functions that one could
consider, e.g., those from~\cite{Dietz,2Pagh}, but simple tabulation
is unique in its speed (like two multiplications in the experiments
from~\cite{Pat}) and ease of implementation, making it a great choice
in practice. For a more thorough comparison of simple tabulation with other
hashing schemes, the reader is refered to~\cite{Pat}.

\section{Applications}
Before proving our main results we describe two almost immediate applications.
\subsection{Bloom Filters}\label{Bloomsec}
Bloom filters were introduced by Bloom~\cite{Bloom}. We will only discuss them briefly here and argue which guarantees are provided when implementing them using simple tabulation. For a thorough introduction including many applications see the survey by Broder and Mitzenmacher~\cite{Broder}. A Bloom filter is a simple data structure which space efficiently represents a set $X\subseteq U$ and supports membership queries of the form ``is $q$ in $X$''. It uses $k$ independent hash functions $h_0,\dots,h_{k-1}:U\to [\bins]$ and $k$ arrays $A_0,\dots,A_{k-1}$ each of $\bins$ bits which are initially all $0$. For each $x\in X$ we calculate $(h_i(x))_{i\in [k]}$ and set the $h_i(x)$'th bit of $A_i$ to $1$ noting that a bit may be set to $1$ several times. To answer the query ``is $q$ in $X$'' we check if the bits corresponding to $(h_i(q))_{i\in [k]}$ are all $1$, outputting ``yes'' if so and ``no'' otherwise. If $q\in X$ we will certainly output the correct answer but if $q\notin X$ we potentially get a false positive in the case that all the bits corresponding to $(h_i(q))_{i\in [k]}$ are set to $1$ by other keys in $X$. In the case that $q\notin X$ the probability of getting a false positive is  
\begin{align*}
\prod_{i=0}^{k-1} \Pr [h_i(q) \in h_i(X)],
\end{align*}
which with fully random hashing is $\trp^k= (1-(1-1/\bins)^\balls)^k \approx (1-e^{-\balls/\bins})^k$.

It should be noted that Bloom filters are most commonly described in a related though not identical way. In this related setting we use a single $(k\bins)$-bit array $A$ and let $h_1,\dots,h_{k-1}:U\to [k\bins]$, setting the bits of $A$ corresponding to $(h_i(x))_{i\in [k]}$ to $1$ for each $x\in X$. With fully random hashing the probability that a bit is set to $1$ is then $q_0:=1-\left(1-\frac{1}{k\bins} \right)^{\balls k}$ and the probability of a false positive is thus at most $q_0^k=\left(1-\left(1-\frac{1}{k\bins} \right)^{\balls k}\right)^k\leq \trp^k$.  Despite the difference, simple calculus shows that $\trp-q_0=\OO(1/\bins)$ and so 
\begin{align*}
\trp^k-q_0^k=(\trp-q_0) \sum_{i=0}^{k-1}\trp^iq_0^{k-i-1}=\OO\left( \frac{k\trp^{k-1}}{\bins} \right).
\end{align*}
In particular if  $\trp =1-\Omega(1)$ or if the number of filters $k$ is not too large (both being the case in practice) the failure probability in the two models are almost identical. We use the model with $k$ different tables each of size $\bins$ as this makes it very easy to estimate the error probability using Theorem~\ref{expthm} and the independence of the hash functions. We can in fact view $h_i$ as a map from $U$ to $[k\bins ]$ but having image in $[(i+1)n]\backslash[i\bins]$ getting us to the model with just one array.

From Theorem~\ref{expthm} we immediately obtain the following corollary.
\begin{cor}
Let $X\subseteq U$ with $|X|=\balls$ and $y\in U\backslash X$. Suppose we represent $X$ with a Bloom filter using $k$ independent simple tabulation hash functions $h_0,\dots,h_{k-1}:U\to [\bins]$. The probability of getting a false positive when querying $q$ is at most 
\begin{align*}
 \left(\trp+\frac{2\balls^{2-1/c}}{\bins^2} \right)^k.  
\end{align*}
\end{cor}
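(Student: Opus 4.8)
The plan is to obtain this corollary directly from the query-ball version of Theorem~\ref{expthm} together with the mutual independence of the $k$ hash functions, so the proof is very short. First I would recall that $h_0,\dots,h_{k-1}$ are each built from their own collection of independent fully random character tables, and that these tables are drawn independently across $i$; hence $h_0,\dots,h_{k-1}$ are mutually independent as random functions. It follows that the events $\{h_i(q)\in h_i(X)\}_{i\in[k]}$ are mutually independent, and therefore
\begin{align*}
\Pr[\text{false positive on }q]=\Pr\bigl[\forall i\in[k]:\ h_i(q)\in h_i(X)\bigr]=\prod_{i=0}^{k-1}\Pr[h_i(q)\in h_i(X)].
\end{align*}

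Next I would bound each factor. Fix $i$ and set $y=h_i(q)$. Since $q\in U\setminus X$ is a distinguished query ball and the bin $y$ is allowed to depend on its hash value, the last statement of Theorem~\ref{expthm} applies and gives $\lvert\Pr[h_i(q)\in h_i(X)]-\trp\rvert\le 2\balls^{2-1/c}/\bins^2$, and in particular $\Pr[h_i(q)\in h_i(X)]\le \trp+2\balls^{2-1/c}/\bins^2$. As this holds for every $i\in[k]$, multiplying the $k$ nonnegative factors yields $\Pr[\text{false positive on }q]\le\bigl(\trp+2\balls^{2-1/c}/\bins^2\bigr)^k$, which is the assertion.

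There is essentially no obstacle here; the only point that needs care is that one must invoke the \emph{query-dependent} form of Theorem~\ref{expthm} (the one with the extra factor $2$) rather than the fixed-bin form, precisely because the bin $y=h_i(q)$ is a hash value under $h_i$ and not an index fixed in advance. For the single-array variant of the Bloom filter mentioned above one applies the same argument to the shifted copies of the $h_i$ with images in $[(i+1)\bins]\setminus[i\bins]$, incurring only the additional $\OO(k\trp^{k-1}/\bins)$ discrepancy already accounted for in the preceding discussion.
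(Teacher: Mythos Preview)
Your proposal is correct and matches the paper's approach exactly: the paper states that the corollary follows ``immediately'' from Theorem~\ref{expthm}, and your argument---independence of the $k$ hash functions giving the product formula, then the query-dependent bound of Theorem~\ref{expthm} applied to each factor---is precisely the intended one-line derivation. Your remark that the query-dependent form (with the factor $2$) is the relevant one is the only subtlety, and you handle it correctly.
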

At this point one can play with the parameters. In the fully random
setting one can show that if the number of balls $\balls$ and the the
total number of bins $k\bins$ are fixed one needs to choose $k$ and
$\bins$ such that $\trp\approx 1/2$ in order to minimise the error
probability (see~\cite{Broder}). For this, one needs $\balls \approx
\bins \ln(2)$ and if $\bins$ is chosen so, the probability above is at
most $(\trp+\OO(\bins^{-1/c} ) )^k$. In applications, 
$k$ is normally a small number like $10$ for a 0.1\% false positive probability. In particular, $k=\bins^{o(1)}$, and then $(\trp+\OO(\bins^{-1/c} ) )^k=\trp^k(1+o(1))$, 
asymptotically matching the fully random setting.

To resolve the issue that the range of a simple tabulation function has size $2^r$ but that we wish to choose $\bins \approx
\balls /\ln(2)$, we choose $r$ such that $2^r\geq \bins^2$ and use the combined hash function $s\circ h:U\to [\bins]$ described in Section~\ref{Projdessec}. Now appealing to Theorem~\ref{expthm2} instead of Theorem~\ref{expthm} we can again drive the false positive probability down to $\trp^k(1+o(1))$ when $k=\bins^{o(1)}$.

\paragraph{Alternatives:} 
The argument by Mitzenmacher and Vadhan~\cite{Vadhan} discussed in
relation to~\eqref{kindineq} actually yields a tight bound on the
probability of a false positive when using $\ell$-independent hashing for
Bloom filters. We do not state their result here but mention that when
$\ell$ is constant the error probability may again deviate by an additive
constant from that of the fully random setting.  It is also shown
in~\cite{Vadhan} that if the input has enough entropy we can get the
probability of a false positive to match that from the fully random
setting asymptotically even using $2$-independent hashing, yet
it cannot be trusted for certain types of input. 

Now, imagine you are a software engineer that wants to implement a
Bloom filter, proportioning it for a desired low false-positive
probability. You can go to a wikipedia page
(\url{en.wikipedia.org/wiki/Bloom_filter}) or a texbook like
\cite{Mit} and read how to do it assuming full randomness. If you read~\cite{Vadhan}, what do you do? Do you set $\ell=2$ and cross your fingers, or do you pay the cost of
a slower hash function with a larger $\ell$, adjusting the false-positive 
probabilities accordingly? Which $\ell$ do you pick?

With our result, there are now hard choices. The answer is simple. We
just have to add that everything works as stated for any possible
input if the hashing is implemented with simple tabulation hashing
(\url{en.wikipedia.org/wiki/Tabulation_hashing}) which is both very
fast and very easy to implement.

\subsection{Filter Hashing}\label{filtsec}
In Filter hashing, as introduced by Fotakis et al.~\cite{Fot}, we wish to store as many elements as possible of a set $X\subseteq U$ of size $|X|=\balls=\bins$ in $d$ hash tables $(T_i)_{i\in [d]}$. The total number of entries in the tables is at most $\bins$ and each entry  can store just a single key. For $i\in [d]$ we pick independent hash functions $h_i:U\to [\bins_i]$ where $\bins_i$ is the number of entries in $T_i$. The keys are allocated as follows: We first greedily store a key from $h_0^{-1}(\{y\})$ in $T_0[y]$ for each $y\in h_0(X)$. This lets us store exactly $|h_0(X)|$ keys. Letting $S_0$ be the so stored keys and $X_1=X\backslash S_0$ the remaining keys, we repeat the process, storing $|h(X_1)|$ keys in $T_1$ using $h_1$ etc.

An alternative and in practice more relevant way to see this is to imagine that the keys arrive sequentially. When a new key $x$ arrives we let $i$ be the smallest index such that $T_i[h_i(x)]$ is unmatched and store $x$ in that entry. If no such $i$ exists the key is not stored. The name Filter hashing comes from this view which prompts the picture of particles (the keys) passing through filters (the tables) being caught by a filter only if there is a vacant spot.

The question is for a given $\eps>0$ how few filters that are needed
in order to store all but at most $\eps \bins$ keys with high
probability. Note that the remaining $\eps \bins$ keys can be stored
using any hashing scheme which uses linear space, for example
Cuckoo hashing with simple tabulation~\cite{Pagh,Pat}, to get a
  total space usage of $(1+O(\eps))\bins$.
  
One can argue that with fully random hashing one needs
$\Omega(\log^2(1/\eps))$ filters to achieve that whp
at least $(1-\eps)\bins$ keys are stored. To see that we can achieve
this bound with simple tabulation we essentially proceed as
in~\cite{Fot}. Let $\gamma>0$ be any constant and choose $\delta>0$ according to Theorem~\ref{hpthm}
so that if $X\subseteq
U$ with $|X|=\balls$ and $h:U\to [\bins]$ is a simple tabulation hash
function, then $\fbins \geq \bins(1-\left(1-1/\bins
\right)^{\delta\balls})$ with probability at least
$1-\bins^{-\gamma}$. 

Let $m_0=n$. For $i=0,1,\ldots$, we pick $n_i$ to be the largest power
of two below $\delta m_i/\log (1/\eps)$.  We then set $m_{i+1}=n-\sum_{j=0}^i n_j$,
terminating when $m_{i+1}\leq \eps n$.
Then $T_i$ is indexed by
$(\log_2 n_i)$-bit strings --- the range of a simple tabulation
hash function $h_i$.  Letting $d$ be minimal such that $\balls_d\leq \eps \bins$ we have that $(1-\eps)\bins\leq \sum_{i\in [d]} \bins_i\leq\bins$ and as  $\balls_i$ decreases by at least a factor of $\left(1-\frac{\delta}{2\log(1/\eps)}\right)$ in each step, $d\leq \lceil 2\log(1/\eps)^2/\delta\rceil$.
 
How many bins of $T_i $ get filled? Even if all bins from filters $(T_j)_{j<i}$ are non-empty we have at least $\balls_i$ balls left and so with probability $1-\OO(\bins_i^{-\gamma})$ the number of bins we hit is at least 
\begin{align*}
\bins_i(1-\left(1-1/\bins_i \right)^{\delta\balls_i})\geq \bins_i (1-e^{-\delta\balls_i/\bins_i})\geq \bins_i(1-\eps).
\end{align*}
Thus, with probability at least $1-\OO(d \bins_d^{-\gamma})$, for each $i\in [d]$, filter $i$ gets at least $(1-\eps)\bins_i$ balls. Since $ \sum_{i\in [d]}\bins_i\geq (1-\eps)\bins$, the number of overflowing balls is at most $2 \eps \bins$ in this case. Assuming for example that $\eps=\Omega(\bins^{-1/2})$, as would be the case in most applications, we get that the fraction of balls not stored is $\OO(\eps)$ with probability at least $1-\Olog(\bins^{-\gamma/2})$.

\paragraph{Alternatives} The hashing scheme for Filter hashing described in~\cite{Fot} uses $(12\lceil \ln(4/\eps)+1 \rceil)$-independent polynomial hashing to achieve an overflow of at most $\eps \bins$ balls. In particular the choice of hash functions depends on $\eps$ and becomes more unrealistic the smaller $\eps$ is. In contrast when using simple tabulation (which is only $3$-independent) for Filter hashing we only need to change the number of filters, not the hashing, when $\eps$ varies. It should be mentioned that only $\lceil \ln(4/\eps)^2 \rceil$ filters are needed for the result in~\cite{Fot} whereas we need a constant factor more. It can however be shown  (we provide no details) that we can get down to $d=\lceil 2\log(1/\eps)^2 \rceil $ filters by applying~\eqref{cbl} of Theorem~\ref{concthm} if we settle for an error probability of $\OO(\bins^{-1+\eta})$ for a given constant $\eta > 0$.

Taking a step back we see the merits of a hashing scheme giving many
complementary probabilistic guarantees. As shown by P\v{a}tra\c{s}cu
and Thorup~\cite{Pat}, Cuckoo hashing~\cite{Pagh} implemented with
simple tabulation succeeds with probability $1-O(\bins^{-1/3})$ (for a
recent simpler proof of this result, see Aamand et
al. \cite{Aamand}). More precisely, for a set $X'$ of $\balls'$
balls, let $\bins'$ be the least power of two bigger than
$(1+\Omega(1))\balls'$. Allocating tables $T'_0,T'_1$ of size $\bins'$, and
using simple tabulation hash functions $h'_0,h'_1:U\to[\bins']$, with probability $1-O(\bins^{-1/3})$ Cuckoo hashing succeeds in placing the keys such that
every key $x\in X'$ is found in either $T'_0[h_0'(x)]$ or
$T'_1[h_1'(x)]$. In case it fails, we just try again with new random $h'_0,h'_1$.

We now use Cuckoo hashing to store the $\bins'=O(\eps \bins)$ keys
remaining after the filer hashing, appending the Cuckoo tables to the
filter tables so that $T_{d+i}=T'_i$ and $h_{d+i}=h'_i$ for $i=0,1$. Then $x\in X$
if and only if for some $i\in [d+2]$, we have $x=T_i[h_i(x)]$.  We
note that all these $d+2$ lookups could be done in parallel.
Moreover, as the
output bits of simple tabulation are mutually independent, the $d+2$
hash functions $h_i:U \to [2^{r_i}]$, $2^{r_i}=n_i$, can be
implemented as a single simple tabulation hash function $h:U\to
[2^{r_1+\dots+r_{d+2}}]$ and therefore all be calculated using just
$c=\OO(1)$ look-ups in simple tabulation character tables.


\section{Preliminaries}
As in~\cite{Mik} we define a \textbf{position character} to be an element $(j,a)\in [c]\times\Sigma$. Simple tabulation hash functions are initially defined only on keys in $U$ but we can extend the definition to sets of position characters $S=\{(i_{j},a_{j}):j\in [k]\}$ by letting $h(S)=\bigoplus_{j\in [k]}h_{i_j}(a_j)$. This coincides with $h(x)$ when the key $x\in U=[\Sigma]^c$ is viewed as the set of position characters $\{(i,x[i]):i\in [c]\}$.  

We start by describing an ordering of the position characters, introduced by P\v{a}tra\c{s}cu and Thorup~\cite{Pat} in order to prove that the number of balls hashing to a specific bin is Chernoff concentrated when using simple tabulation. If $X\subseteq U$ is a set of keys and $\prec$ is any ordering of the position characters $[c] \times \Sigma$ we for $\alpha\in[c] \times \Sigma$ define $X_\alpha=\{x\in X \ \vert \ \forall \beta \in [c]\times \Sigma: \beta \in x \Rightarrow \beta \preceq \alpha \}$. Here we view the keys as sets of position characters. Further define $G_{\alpha}=X_{\alpha} \backslash (\bigcup_{\beta \prec \alpha} X_{\beta})$ to be the set of keys in $X_\alpha$ containing $\alpha$ as a position character. P\v{a}tra\c{s}cu and Thorup argued that the ordering may be chosen such that the groups $G_{\alpha}$ are not too large.
\begin{lemma}[P\v{a}tra\c{s}cu and Thorup~\cite{Pat}]\label{orderlemma}
Let $X\subseteq U$ with $|X|=\balls$. There exists an ordering $\prec$ of the position characters such that $|G_{\alpha}|\leq \balls^{1-1/c}$ for all position characters $\alpha$. If $q$ is any (query) key in $X$ or outside $X$, we may choose the ordering such that the position characters of $q$ are first in the order and such that $|G_{\alpha}|\leq 2\balls^{1-1/c}$ for all position characters $\alpha$.
\end{lemma}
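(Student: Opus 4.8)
The plan is to build the order $\prec$ from the top, repeatedly deciding which position character is the $\prec$-largest among those not yet placed. The point is that a key $x\in X$ lies in $G_\alpha$ exactly when $\alpha$ is the $\prec$-largest of the $c$ position characters contained in $x$; hence the sets $G_\alpha$ partition $X$, and at the moment we declare some $\alpha$ to be the current largest unplaced character, $G_\alpha$ is precisely the set of not-yet-removed keys containing $\alpha$ --- after which we discard those keys (a key is discarded as soon as one of its characters is placed, so every surviving key contains only unplaced characters). It therefore suffices to guarantee that, whenever keys survive, some unplaced position character is contained in few of them.

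For the first statement, suppose $\balls'$ keys survive. The survivors are distinct vectors in $\Sigma^c$, so if $d_j$ denotes the number of distinct values occurring in coordinate $j\in[c]$ among them, then $\prod_{j\in[c]}d_j\ge\balls'$ and hence $d_{j^\ast}\ge(\balls')^{1/c}$ for some $j^\ast$. Split the survivors according to their $j^\ast$-th coordinate into $d_{j^\ast}$ nonempty classes; the smallest class, corresponding to some position character $(j^\ast,a)$, has at most $\balls'/d_{j^\ast}\le(\balls')^{1-1/c}\le\balls^{1-1/c}$ keys. Place $(j^\ast,a)$ as the next character, remove that class, and iterate until no keys survive (inserting any position characters contained in no key at the bottom, with empty groups). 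This yields $|G_\alpha|\le\balls^{1-1/c}$ for every $\alpha$.

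For the second statement we additionally require the $c$ position characters $(0,q[0]),\dots,(c-1,q[c-1])$ of $q$ to be $\prec$-smallest, i.e. first in the order, in an arbitrary internal order, and run the same top-down procedure on the remaining characters. Since any $x\ne q$ contains a position character not contained in $q$, its $\prec$-largest character is never one of $q$'s; so every group $G_\alpha$ with $\alpha\notin q$ is analysed exactly as above, while each of the $c$ groups $G_\alpha$ with $\alpha\in q$ has size at most $1$ because $q$ is the only key all of whose position characters lie among those of $q$. The only change is that, at coordinate $j$, we may now use only values $a\ne q[j]$; writing $d_j'$ for the number of such values among the $\balls'$ survivors, we still have $\prod_{j\in[c]}(d_j'+1)\ge\prod_{j\in[c]}d_j\ge\balls'$, so $d_{j^\ast}'\ge(\balls')^{1/c}-1$ for some $j^\ast$, and the smallest class among the non-$q[j^\ast]$ values has at most $\balls'/((\balls')^{1/c}-1)\le 2(\balls')^{1-1/c}\le 2\balls^{1-1/c}$ keys, provided $(\balls')^{1/c}\ge 2$. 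Once fewer than $2^c=\OO(1)$ keys survive we place their still-unplaced characters arbitrarily (above those of $q$); the resulting groups have size $<2^c$, which is $\le 2\balls^{1-1/c}$ except when $\balls$ is below a constant, where the bound $|G_\alpha|\le\balls\le 2\balls^{1-1/c}$ holds trivially.

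The construction and the first statement are essentially the counting argument of P\v{a}tra\c{s}cu and Thorup and present no real difficulty. The only point needing care is the query version: one must check that forbidding the value $q[j]$ in coordinate $j$ costs only the stated factor of $2$ and never stalls the process, together with the routine bookkeeping for the last $\OO(1)$ surviving keys and for very small $\balls$.
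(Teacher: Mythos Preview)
Your argument is correct and is essentially the standard greedy construction from P\v{a}tra\c{s}cu and Thorup that the paper cites (the paper itself does not give a proof, only the reference): pick, among the surviving keys, a position character that by the product bound $\prod_j d_j\ge m'$ must occur in at most $(m')^{1-1/c}$ of them, declare it largest, and recurse. Your treatment of the query case---forbidding the values $q[j]$, absorbing the resulting $-1$ in the pigeonhole count at the cost of a factor~$2$, and cleaning up the final $<2^c$ survivors and the small-$m$ regime separately---is likewise the standard bookkeeping, carried out carefully and correctly.
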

Let us throughout this section assume that $\prec$ is chosen as to satisfy the properties of Lemma~\ref{orderlemma}. A set $Y\subseteq U$ is said to be \textbf{$d$-bounded} if $|h^{-1}(\{z\})\cap Y|\leq d$ for all $z\in R$. In other words no bin gets more than $d$ balls from $Y$. 

\begin{lemma}[P\v{a}tra\c{s}cu and Thorup~\cite{Pat}]\label{dboundedlemma}
Assume that the number of bins $\bins$ is at least $\balls^{1-1/(2c)}$. For any constant $\gamma$, and $d=\min\left\{2c(3+\gamma)^c, 2^{2c(3+\gamma)} \right\}$ all groups $G_\alpha$ are $d$-bounded with probability at least $1-\bins^{-\gamma}$.
\end{lemma}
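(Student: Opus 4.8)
I would prove this by induction on the number of characters $c$, in each step reducing to the strongest possible collision-probability estimate. First, by a union bound over the at most $\balls$ position characters $\alpha$ with $G_\alpha\neq\emptyset$ (empty groups being trivially $d$-bounded), it suffices to bound, for a fixed $\alpha=(j,a)$, the probability that $G_\alpha$ is not $d$-bounded. Here I use the defining property of $G_\alpha$: every $x\in G_\alpha$ contains $\alpha$ and all of its other position characters precede $\alpha$, so in particular $x$ has $j$th coordinate $a$. Writing $h(x)=h_j(a)\oplus\bar h(\bar x)$, where $\bar x$ is $x$ with coordinate $j$ removed and $\bar h$ is the $(c-1)$-character simple tabulation hash built from $h_i$, $i\neq j$, we see that $G_\alpha$ is $d$-bounded precisely when the tail set $\bar G_\alpha=\{\bar x:x\in G_\alpha\}$ (a set of $(c-1)$-character keys) is $d$-bounded for $\bar h$. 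This is the same problem with one fewer character, which is the basis of the induction; the case $c=1$ is trivial since then $|G_\alpha|\le\balls^{1-1/c}=1$.

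For the inductive step, apply Lemma~\ref{orderlemma} to $\bar G_\alpha$: there is an ordering of the $(c-1)$-position characters splitting $\bar G_\alpha=\bigsqcup_\beta K_\beta$ into groups with $|K_\beta|\le|\bar G_\alpha|^{1-1/(c-1)}$, each $K_\beta$ sharing its own maximum position character. By the inductive hypothesis each $K_\beta$ is $d'$-bounded with high probability, so every bin receives at most $d'$ keys from every $K_\beta$ it meets; hence $\bar G_\alpha$ is $(d's)$-bounded provided no bin is met by more than $s$ of the groups $K_\beta$. The crucial point for controlling $s$ is this: if keys $y^{(0)},\dots,y^{(s)}$ have pairwise distinct maximum position characters, then, viewing keys as subsets of position characters and extending $\bar h$ linearly over $\mathbb{F}_2$ (so $\bar h(y)\oplus\bar h(y')=\bar h(y\triangle y')$), after ordering the $y^{(i)}$ so that their maxima decrease the consecutive differences $\mathbf{1}_{y^{(i-1)}}\oplus\mathbf{1}_{y^{(i)}}$ have pairwise distinct largest coordinates, hence are linearly independent; so the affine span of $\{y^{(0)},\dots,y^{(s)}\}$ is $s$-dimensional. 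Since the collision probability $\Pr[\bar h(y^{(0)})=\dots=\bar h(y^{(s)})]$ equals $\bins^{-\rho}$ with $\rho$ the affine-span dimension (a one-line computation, $\bar h$ being $\mathbb{F}_2$-linear with independent uniform table entries), this gives the best possible bound $\bins^{-s}$. As a key meeting a bin determines which $K_\beta$ it lies in, we get
\[ \Pr[\text{some bin is met by more than }s\text{ of the }K_\beta]\;\le\;\binom{|\bar G_\alpha|}{s+1}\bins^{-s}\;\le\;|\bar G_\alpha|\,\bigl(|\bar G_\alpha|/\bins\bigr)^{s}. \]

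It remains to choose the parameters. Iterating the reduction $c-1$ times produces, at ``level'' $t$, a family of $(c-t)$-character key sets, each of size at most $\balls^{(c-1-t)/c}$ by repeated use of Lemma~\ref{orderlemma}, and whose sizes sum to $\balls$ since each step only subdivides the current partition of $X$ into finer groups; at the last level all sets have size $1$. Summing the displayed bound over the level-$t$ sets $S$, using $\sum_S|S|=\balls$ together with the hypothesis $\bins\ge\balls^{1-1/(2c)}$ and its consequence $\balls\le\bins^{2c/(2c-1)}$, the level-$t$ failure probability is at most $\balls\,\bigl(\balls^{(c-1-t)/c}/\bins\bigr)^{s_t}$, which is $\le\bins^{-\gamma}$ as soon as $s_t$ is of order $c(1+\gamma)/(2t+1)$. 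Taking each $s_t$ minimal and multiplying gives $d=\prod_{t=0}^{c-2}s_t$ of order $\mathrm{poly}(c)\cdot(3+\gamma)^{\Theta(c)}$, and carrying the bookkeeping through yields $d\le 2c(3+\gamma)^c$. The second candidate $2^{2c(3+\gamma)}$ is what one obtains by skipping the recursion altogether: directly, if $d+1$ keys of $G_\alpha$ collide their affine span has dimension at least $\log_2(d+1)$ (a flat of dimension $\rho$ has only $2^\rho$ points), and the same counting with this weaker but $c$-independent estimate gives the exponential-in-$c$ alternative; one keeps whichever of the two is smaller.

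The main obstacle, and really the whole content of the argument, is getting the collision probability of a would-be overloaded bin down to $\bins^{-\Omega(d)}$ rather than just $\bins^{-\Omega(\log d)}$. A set of $d$ colliding keys can have affine span as small as $\log_2 d$ (take a Boolean subcube of keys), so the naive route --- union bound over all $(d+1)$-subsets, each contributing only $\bins^{-\log_2(d+1)}$ --- is far too lossy because of the $\binom{|G_\alpha|}{d+1}=\balls^{\Theta(d)}$ factor. The recursion circumvents this by only ever comparing keys in distinct sub-groups, i.e.\ with distinct maximum position characters, for which the affine span is automatically full and the collision probability is the optimal $\bins^{-s}$; the price is merely that $d$ grows by a factor $\Theta(3+\gamma)$ per character, which is exactly the shape of the stated bound.
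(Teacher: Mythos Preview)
The paper does not give its own proof of this lemma; it is imported from P\v{a}tra\c{s}cu and Thorup~\cite{Pat}, with only the one-line remark that it follows from Lemma~\ref{dboundedlemma2} (also quoted from~\cite{Pat}). Your argument is essentially the original proof from~\cite{Pat}: peel off the common position character of $G_\alpha$, apply Lemma~\ref{orderlemma} to the resulting $(c-1)$-character set and recurse, using the key observation that representatives of distinct subgroups have pairwise distinct maximal position characters and are therefore affinely independent, so an $(s{+}1)$-way collision across subgroups has probability exactly $\bins^{-s}$; the alternative bound $2^{2c(3+\gamma)}$ indeed comes from the coarser fact that any $d{+}1$ distinct keys contain $\lceil\log_2(d{+}1)\rceil+1$ affinely independent ones. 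The structure and the central ideas are correct; only the final bookkeeping that nails down the precise value of $d$ would need to be written out more carefully to match the stated constants.
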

Lemma~\ref{dboundedlemma} follows from another lemma from~\cite{Pat} which we restate here as we will use it in one of our proofs.
\begin{lemma}[P\v{a}tra\c{s}cu and Thorup~\cite{Pat}]\label{dboundedlemma2}
Let $\eps>0$ be a fixed constant and assume that  $\balls\leq \bins^{1-\eps}$. For any constant $\gamma$ no bin gets more than $\min \left(((1+\gamma)/\eps)^c,2^{(1+\gamma)/\eps}\right)=O(1)$ balls with probability  at least $1-n^{-\gamma}$.
\end{lemma}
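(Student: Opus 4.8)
The plan is to establish the two bounds $d\le\lceil(1+\gamma)/\eps\rceil^{c}$ and $d\le 2^{\lceil(1+\gamma)/\eps\rceil}$ on the maximum load separately by two short union bounds; each holds with probability $1-\OO(\bins^{-\gamma})$, so a final union bound (absorbed by a harmless change of $\gamma$) gives the stated minimum. First I would dispose of the case where $\bins$ is below a constant depending only on $c,\eps,\gamma$ --- then $\balls\le\bins^{1-\eps}$ is bounded and there is nothing to prove --- and note that we may assume the target $d$ is at most $\balls$. The one structural input, used throughout, comes from identifying each key $x$ with the $\mathbb{F}_2$-vector on position characters that indicates $\{(i,x[i]):i\in[c]\}$: then $h$ is $\mathbb{F}_2$-linear, and since $h_0,\dots,h_{c-1}$ are fully random and independent, the hash values of any $\mathbb{F}_2$-linearly independent set of keys are independent and uniform in $[\bins]=\mathbb{F}_2^{r}$; in particular a fixed such $k$-tuple lands in a fixed bin with probability exactly $\bins^{-k}$.

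\textbf{The exponential bound.}
Set $s=\lceil(1+\gamma)/\eps\rceil$. If some bin receives at least $2^{s}$ balls, then among those $\ge 2^{s}$ distinct (hence nonzero) key-vectors at least $s$ are $\mathbb{F}_2$-linearly independent, since a subspace of dimension $<s$ has fewer than $2^{s}$ elements. Union-bounding over the $\bins$ bins and over the at most $\binom{\balls}{s}\le\balls^{s}$ linearly independent $s$-subsets of $X$, each landing in a fixed bin with probability $\bins^{-s}$,
\[
\Pr[\exists\text{ bin with }\ge 2^{s}\text{ balls}]\;\le\;\bins\cdot\balls^{s}\cdot\bins^{-s}\;\le\;\bins^{1-\eps s}\;\le\;\bins^{-\gamma},
\]
using $\balls\le\bins^{1-\eps}$ and $\eps s\ge 1+\gamma$. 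Hence whp the maximum load is below $2^{s}$.

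\textbf{The polynomial bound.}
Again set $s=\lceil(1+\gamma)/\eps\rceil$. If some bin receives more than $s^{c}$ balls, then by pigeonhole there is a coordinate $i\in[c]$ in which those balls realize at least $s+1$ distinct characters --- otherwise they would all lie in a product set of at most $s^{c}$ keys. Fix such an $i$ and pick $s+1$ of the colliding keys $y_0,\dots,y_s$ with pairwise distinct $i$-th characters; conditioning on the tables $(h_l)_{l\ne i}$, each $h(y_j)$ equals $h_i(y_j[i])$ XORed with a constant, and $h_i(y_0[i]),\dots,h_i(y_s[i])$ are $s+1$ independent uniform elements of $[\bins]$, so this tuple hits any fixed bin with probability exactly $\bins^{-(s+1)}$. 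Union-bounding over $i\in[c]$, over bins, and over the at most $\balls^{s+1}$ such tuples,
\[
\Pr[\exists\text{ bin with }> s^{c}\text{ balls}]\;\le\;c\cdot\bins\cdot\balls^{s+1}\cdot\bins^{-(s+1)}\;\le\;c\,\bins^{1-\eps(s+1)}\;\le\;\bins^{-\gamma},
\]
since $\eps(s+1)\ge 1+\gamma+\eps$ swallows the factor $c$ once $\bins\ge c^{1/\eps}$.

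\textbf{Where the difficulty lies.}
I expect both union bounds to be routine, with the only real choice being which event to union over: an $s$-dimensional independent sub-collection of the colliding keys for the exponential bound, and an $(s+1)$-subset that is ``spread'' in a single coordinate for the polynomial bound. The latter is the one place the structure of simple tabulation is genuinely exploited --- a key has exactly one character per coordinate, so freezing the other $c-1$ tables leaves fully independent randomness at coordinate $i$; this is what a generic $k$-independent family lacks. The remaining work --- bookkeeping ceilings and the factor $c$ so the thresholds come out exactly as $((1+\gamma)/\eps)^{c}$ and $2^{(1+\gamma)/\eps}$ rather than their ceilings --- is minor and, as usual with this lemma, immaterial for the applications, where $c,\eps,\gamma=\OO(1)$ and only $\OO(1)$ maximum load is needed.
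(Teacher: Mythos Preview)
The paper does not prove this lemma; it is quoted verbatim as a result of P\v{a}tra\c{s}cu and Thorup and cited to~\cite{Pat} without argument. So there is no ``paper's own proof'' to compare against here.

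That said, your argument is correct and is essentially the original one. The two halves are exactly the two mechanisms P\v{a}tra\c{s}cu and Thorup use: for the $2^{(1+\gamma)/\eps}$ bound, one observes that $2^{s}$ distinct keys in $\mathbb{F}_2^{[c]\times\Sigma}$ must span dimension $\ge s$, so they contain $s$ linearly independent keys whose hashes are mutually independent (this is the content of Lemma~\ref{depchar} in the paper), after which a union bound over $\binom{\balls}{s}$ subsets and $\bins$ bins finishes it. For the $((1+\gamma)/\eps)^{c}$ bound, the pigeonhole step --- more than $s^{c}$ colliding keys forces some coordinate to realise $>s$ distinct characters --- followed by conditioning on the other $c-1$ tables to expose $s+1$ genuinely independent values of $h_i$, is again the standard move. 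Both union bounds are clean and the arithmetic with $\balls\le\bins^{1-\eps}$ and $s=\lceil(1+\gamma)/\eps\rceil$ is right.

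The only cosmetic discrepancy you already flagged: you obtain $\lceil(1+\gamma)/\eps\rceil^{c}$ and $2^{\lceil(1+\gamma)/\eps\rceil}$ rather than the floor-free expressions in the statement, and you need $\bins\ge c^{1/\eps}$ to absorb the factor $c$. Both are harmless for the $O(1)$ conclusion and for every use of the lemma in this paper.
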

Let us describe heuristically why we are interested in the order $\prec$ and its properties. We will think of $h$ as being uncovered stepwise by fixing $h(\alpha)$ only when $(h(\beta))_{\beta \prec \alpha}$ has been fixed. At the point where $h(\alpha)$ is to be fixed the internal clustering of the keys in $G_\alpha$ has been settled and $h(\alpha)$ acts merely as a translation, that is, as a shift by an XOR with $h(\alpha)$. This viewpoint opens up for sequential analyses where for example it  may be possible to calculate the probability of a bin becoming empty or to apply martingale concentration inequalities. The hurdle is that the internal clustering of the keys in the groups are not independent as the hash value of earlier position characters dictate how later groups cluster so we still have to come up with ways of dealing with these dependencies.

\section{Proofs of main results}

In order to pave the way for the proofs of our main results we start by stating two technical lemmas, namely Lemma~\ref{ineqlemma} and~\ref{varlem} below. We provide proofs at the end of this section. Lemma~\ref{ineqlemma} is hardly more than an observation. We include it as we will be using it repeatedly in the proofs of our main theorems.

\begin{lemma}\label{ineqlemma} 
Assume  $\alpha\geq 1$ and $m,m_0\geq0$ are real numbers. Further assume that  $ 0\leq g_1,\dots,g_k \leq \balls_0$ and $\sum_{i=1}^kg_i=m$. Then \begin{align}\label{simineq1}
\sum_{i=1}^{k}g_i^\alpha\leq \balls_0^{\alpha-1}\balls.
\end{align}
If further $\balls_0\leq \bins$ for some real $\bins$ then 
\begin{align}\label{simineq2}
\prod_{i=1}^k \left(1- \frac{g_i}{\bins} \right)\geq \left(1-\frac{\balls_0}{\bins} \right)^{\balls/\balls_0}.
\end{align}
\end{lemma}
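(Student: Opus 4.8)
The plan is to establish the two inequalities separately, each by an elementary argument; no martingale or hashing machinery is needed here.

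For \eqref{simineq1}, the idea is a one-line domination. If $\balls_0=0$ then all $g_i=0$ and both sides vanish, so assume $\balls_0>0$. Since $\alpha\geq 1$ and $0\leq g_i\leq \balls_0$, we have $g_i^{\alpha-1}\leq \balls_0^{\alpha-1}$, hence $g_i^\alpha=g_i\cdot g_i^{\alpha-1}\leq g_i\,\balls_0^{\alpha-1}$ for every $i$ (with the usual convention $0^0=1$ covering $g_i=0$, $\alpha=1$). Summing over $i$ and using $\sum_i g_i=\balls$ gives $\sum_{i=1}^k g_i^\alpha\leq \balls_0^{\alpha-1}\sum_{i=1}^k g_i=\balls_0^{\alpha-1}\balls$.

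For \eqref{simineq2}, the plan is to reduce to a single-factor inequality and then take a product. If $\balls_0=0$ (hence all $g_i=0$) the claim is trivial, and if $\bins=0$ then $\balls_0\leq\bins$ forces $\balls_0=0$ again; so assume $\balls_0,\bins>0$. The key claim is that for each $i$,
\[
1-\frac{g_i}{\bins}\ \geq\ \left(1-\frac{\balls_0}{\bins}\right)^{g_i/\balls_0}.
\]
Granting this, note every factor on the left is $\geq 1-\balls_0/\bins\geq 0$, so multiplying the $k$ inequalities together preserves the direction and yields $\prod_{i=1}^k\left(1-g_i/\bins\right)\geq\left(1-\balls_0/\bins\right)^{\sum_i g_i/\balls_0}=\left(1-\balls_0/\bins\right)^{\balls/\balls_0}$, as desired. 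To prove the key claim, set $t=g_i/\balls_0\in[0,1]$ and $x=\balls_0/\bins\in[0,1]$; the claim becomes $(1-x)^t\leq 1-tx$ for $t\in[0,1]$. This holds because the function $t\mapsto (1-x)^t$ is convex on $[0,1]$ (an exponential), so it lies on or below the secant line joining its endpoint values $(0,1)$ and $(1,1-x)$, namely the line $t\mapsto 1-tx$; the boundary case $x=1$ is checked directly ($(1-x)^t=0\leq 1-t$ for $t\in(0,1]$, and $=1$ for $t=0$).

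I do not anticipate a real obstacle: the only points requiring a moment's care are the degenerate cases $\balls_0=0$, $\bins=0$, $\balls_0=\bins$, and the observation that all factors in the product are nonnegative so that multiplying the per-factor bounds is legitimate. Everything else is a direct consequence of the convexity of $t\mapsto a^t$.
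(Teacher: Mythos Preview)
Your proof is correct. Both inequalities are established cleanly, and you handle the degenerate cases ($\balls_0=0$, $\bins=\balls_0$) properly; the observation that all factors are nonnegative before multiplying is exactly the care needed.

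Your route differs from the paper's. The paper proves a single unified statement: for any convex $f:[0,\balls_0]\to\R$ with $f(0)=0$, one has $\sum_i f(g_i)\leq(\balls/\balls_0)f(\balls_0)$, and then specialises to $f(x)=x^\alpha$ for \eqref{simineq1} and $f(x)=-\log(1-x/\bins)$ for \eqref{simineq2}. Their proof of the general statement is a smoothing/rearrangement argument (pushing mass to the endpoints: $f(x)+f(y)\leq f(x-t)+f(y+t)$), followed by the secant inequality $f(\eps\balls_0)\leq\eps f(\balls_0)$ for the leftover term. You instead bound each term directly: for \eqref{simineq1} by the trivial domination $g_i^{\alpha-1}\leq\balls_0^{\alpha-1}$, and for \eqref{simineq2} by the per-factor Bernoulli-type bound $(1-x)^t\leq 1-tx$ for $t\in[0,1]$, then sum or multiply. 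Your argument is more elementary and in fact shows that the smoothing step in the paper is unnecessary: the per-term secant inequality $f(g_i)\leq (g_i/\balls_0)f(\balls_0)$ alone already yields the paper's general lemma after summing. What the paper's formulation buys is a single reusable statement for arbitrary convex $f$; what yours buys is brevity and the avoidance of any rearrangement machinery.
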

In our applications of Lemma~\ref{ineqlemma}, $g_1,\dots,g_k$ will be the sizes of the groups $G_{\alpha}$ described in Lemma~\ref{orderlemma}, and $\balls_0$ will be the upper bound on the group sizes provided by the same lemma. 

For the second lemma we assume that the set of keys $X$ has been partitioned into $k$ groups $(X_i)_{i\in [k]}$. Let $C_i$ denote the number of sets $\{x,y\}\subseteq X_i$ such that $x \neq y$ but $h(x)=h(y)$, that is, the number of pairs of colliding keys internal to $X_i$. Denote by $C=\sum_{i=1}^kC_i$ the total number of collisions internal in the groups. The second lemma bounds the expected value of $C$ as well as its variance in the case where the groups are not too large.
\begin{lemma}\label{varlem}
Let $X\subseteq U$ with $|X|=\balls$ be partitioned as above. Suppose that there is an $\balls_0\geq 1$ such that for all $i\in[k]$, $|X_i|\leq \balls_0$. 
Then 
\begin{align}
\E[C] &\leq \frac{\balls \cdot \balls_0}{2 \bins}, & \text{and} \label{expcol}\\
\var[C]&\leq\frac{(3^c+1)\balls^2}{\bins} + \frac{\balls \cdot \balls_0^2}{\bins^2}.& \label{varcol}
\end{align}
For a given query ball $q\in U\backslash X$ and a bin $z\in [\bins]$, the upper bound on $\E[C]$ is also an upper bound on $\E[C \mid h(q)=z]$. For the variance estimate note that if in particular $\balls_0^2=\OO(\balls \bins)$, then $\var[C]=\OO(\balls^2/\bins)$. 
\end{lemma}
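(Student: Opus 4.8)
The plan is to compute $\E[C]$ and $\var[C]$ directly from the definition $C=\sum_{i=1}^k C_i$ where $C_i=\sum_{\{x,y\}\subseteq X_i}[h(x)=h(y)]$. Writing $C=\sum_{\{x,y\}}Z_{\{x,y\}}$ where the sum ranges over all unordered pairs $\{x,y\}$ with $x,y$ in a common group $X_i$ and $Z_{\{x,y\}}=[h(x)=h(y)]$, the expectation is easy: for $x\neq y$, $h(x)\oplus h(y)=h(x\triangle y)$ where $x\triangle y$ is a nonempty set of position characters, so $h(x\triangle y)$ is uniform in $[\bins]$ and $\Pr[h(x)=h(y)]=1/\bins$. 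Hence $\E[C]=\frac1\bins\sum_{i=1}^k\binom{|X_i|}{2}\leq \frac1\bins\sum_i \frac{|X_i|\balls_0}{2}=\frac{\balls\balls_0}{2\bins}$, using $\sum_i|X_i|=\balls$. For the conditional statement, I would note that conditioning on $h(q)=z$ for $q\notin X$ only fixes the XOR of the position characters of $q$, and for any pair $\{x,y\}\subseteq X_i$ with $x,y\neq q$, the quantity $h(x)\oplus h(y)$ depends only on position characters of $x\triangle y$, which contains at least one character not in $q$ (indeed $x\triangle y$ has a character in some coordinate where $x,y$ differ, and if that character also appeared in $q$... here I need to be slightly careful) — more robustly, $h(x\triangle y)$ remains uniform even conditioned on $h(q)$ because $x\triangle y\not\subseteq q$ forces some free character; so $\E[C\mid h(q)=z]\leq \frac{\balls\balls_0}{2\bins}$ as well.

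The variance is the substantive part. I would write $\var[C]=\sum_{P,P'}\mathrm{Cov}(Z_P,Z_{P'})$ over ordered pairs of (unordered) key-pairs $P=\{x,y\}$, $P'=\{x',y'\}$, each within a group. The key observation is that $Z_P=[h(x\triangle y)=0]$ depends only on the hash values $h_i(a)$ for position characters $(i,a)\in x\triangle y$, so $Z_P$ and $Z_{P'}$ are \emph{independent} whenever $(x\triangle y)\cap(x'\triangle y')=\emptyset$, contributing zero covariance. So the sum is over pairs $P,P'$ sharing a position character. I would split into cases by $|P\cap P'|$ (as sets of keys) and by how many position characters are shared. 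The "diagonal" $P=P'$ contributes $\sum_P \var(Z_P)\leq \E[C]\leq \frac{\balls\balls_0}{2\bins}$, and since $\balls_0\leq\balls$ this is absorbed. The dominant case is when $P,P'$ share exactly one key, say $x=x'$, $y\neq y'$; then $x\triangle y$ and $x\triangle y'$ can share position characters (namely those of $x$, or coincidences among $y,y'$), and I expect a bound of the form $\mathrm{Cov}(Z_P,Z_{P'})=\OO(1/\bins)$ only when they genuinely share a character, with the count of such triples $(x,y,y')$ within groups being at most roughly $\balls\cdot\balls_0\cdot(\text{number of }y'\text{ sharing a character})$. This is where the $3^c$ enters: for a fixed key $x$ and a fixed position character, a careful count (summing over which coordinate the shared character sits in, $c$ choices, and bounding collisions) gives the $3^c$ factor. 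The case $P\cap P'=\emptyset$ but sharing a position character is analogous and I expect it to give the leading $\frac{(3^c+1)\balls^2}{\bins}$ term, while the single-key-shared case with the extra "$y'$ determined by a character of $x$" subtlety gives the $\frac{\balls\balls_0^2}{\bins^2}$ term.

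\textbf{The main obstacle} will be organizing the covariance sum cleanly: precisely classifying pairs of key-pairs by their shared position characters, getting the right counting bound (with the $3^c$ constant) for how many pairs share a given character, and verifying that the covariance in each class is $\OO(1/\bins)$ or $\OO(1/\bins^2)$ as needed — in particular checking that when $x\triangle y$ and $x'\triangle y'$ share characters, the event $\{h(x\triangle y)=0\}\cap\{h(x'\triangle y')=0\}$ still has probability $\OO(1/\bins^2)$ unless the two symmetric differences are \emph{equal} as sets (which would need $y,y'$ to collide in a structured way, a lower-order contribution). Once the classification is set up, each individual bound is a short computation; the bookkeeping is the hard part. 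Finally, the remark that $\balls_0^2=\OO(\balls\bins)$ implies $\var[C]=\OO(\balls^2/\bins)$ is immediate from $\eqref{varcol}$ since then $\frac{\balls\balls_0^2}{\bins^2}=\OO\!\left(\frac{\balls^2}{\bins}\right)$, and I would apply Lemma~\ref{ineqlemma} if needed to handle sums of powers of group sizes arising in the count.
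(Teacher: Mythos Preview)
Your covariance decomposition $\var[C]=\sum_{P,P'}\mathrm{Cov}(Z_P,Z_{P'})$ is a sound starting point and is essentially equivalent to the paper's route of bounding $\E[C^2]-(\E[C])^2$. But your case analysis is off, and the key counting tool is missing. When $P$ and $P'$ share exactly one key (your ``dominant'' case $x=x'$, $y\neq y'$), the covariance is in fact \emph{exactly zero}: by $3$-independence of simple tabulation, $(h(x),h(y),h(y'))$ are jointly uniform, so $\Pr[Z_P=Z_{P'}=1]=1/\bins^2=\E[Z_P]\E[Z_{P'}]$. Equivalently, $x\triangle y\neq x\triangle y'$ whenever $y\neq y'$, so your own later remark already forces joint probability $1/\bins^2$. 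This case contributes nothing; it is not where the $3^c$ or the $\balls\balls_0^2/\bins^2$ term comes from.

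The only nonzero off-diagonal covariance is for four \emph{distinct} keys with $x\oplus y\oplus z\oplus w=\emptyset$ (equivalently $x\triangle y=z\triangle w$), where $Z_P=Z_{P'}$ and the covariance is $1/\bins-1/\bins^2$. The entire substance of the variance bound is \emph{counting} such $4$-tuples, and this is not a matter of ``$c$ choices of coordinate''. The paper uses Lemma~\ref{depchar} to identify the dependence condition and then Lemma~\ref{deplemma} (Dahlgaard et al.): the number of $4$-tuples in $A_1\times\cdots\times A_4$ with $\bigoplus x_i=\emptyset$ is at most $(3!!)^c\prod_i\sqrt{|A_i|}=3^c\prod_i\sqrt{|A_i|}$; taking $A_1=A_2=X_i$, $A_3=A_4=X_j$ and summing over $i,j$ gives $3^c\balls^2$, hence the $3^c\balls^2/\bins$ term. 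That lemma is the missing ingredient --- your position-character bookkeeping will not reproduce this bound. Concretely, the paper partitions the tuples in $\E[C^2]$ into $T_1,\dots,T_5$ by $|\{x,y,z,w\}|$ and (in)dependence: $T_1$ gives $\E[C]$, $T_2$ (three keys) gives $\balls\balls_0^2/\bins^2$ via $3$-independence and Lemma~\ref{ineqlemma}, $T_3$ (four independent) is bounded by $(\E[C])^2$, and $T_4\cup T_5$ (four dependent) uses Lemmas~\ref{depchar} and~\ref{deplemma}. For $\E[C\mid h(q)=z]$, just cite $3$-independence: $q,x,y$ are three distinct keys, so conditioning on $h(q)$ leaves $\Pr[h(x)=h(y)]=1/\bins$ unchanged.
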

We will apply this lemma when the $X_i$ are the groups arising from the order $\prec$ of Lemma~\ref{orderlemma}.
With these results in hand we are ready to prove Theorem~\ref{expthm}.
\begin{proof}[\textbf{Proof of Theorem~\ref{expthm}}]
Let us first prove the theorem in the case where $y$ is a fixed bin not chosen dependently on the hash value of a query ball. If $\balls^{1-1/c}\geq \bins$ the result is trivial as then the stated upper bound is at least $1$. Assume then that $\balls^{1-1/c}\leq \bins$. Consider the ordering $\alpha_1\prec \cdots \prec \alpha_k$ of the position characters obtained from~Lemma~\ref{orderlemma} such that all groups $G_i:=G_{\alpha_i}$ have size at most $\balls^{1-1/c}$. We will denote by $\balls_0:=\balls^{1-1/c}$ the maximal possible group size.

We randomly fix the $h(\alpha_i)$ in the order obtained from $\prec$ not fixing $h(\alpha_i)$ before having fixed $h(\alpha_j)$ for all $j<i$. If $x\in G_i$ then $h(x)=h(\alpha_i) \oplus h(x \backslash \{\alpha_i\})$ and since $\beta \prec \alpha_i$ for all $\beta \in x \backslash \{\alpha_i\}$ only $h(\alpha_i)$ has to be fixed in order to settle $h(x)$. The number of different bins hit by the keys of $G_i$ when fixing $h(\alpha_i)$ is thus exactly the size of the set $\{h(x \backslash \{\alpha_i\}):x\in G_i\}$ which is simply translated by an XOR with $h(\alpha_i)$ and for  $x\in G_i$ we have that $h(x)$ is uniform in its range when conditioned on the values $(h(\alpha_j))_{j<i}$.

To make it easier to calculate the probability that $y\in h(X)$ we introduce some \emph{dummy balls}. At the point where we are to fix $h(\alpha_i)$ we dependently on $(h(\alpha_j))_{j< i}$
in any deterministic way choose a set $D_i\subseteq R=[\bins]$ of dummy balls, disjoint from $\{h(x\backslash \{\alpha_i\}): x \in G_i\}$, such that $\{h(x\backslash \{\alpha_i\}): x \in G_i\}\cup D_i$ has size exactly $|G_i|$. We will say that a bin $z$ is \emph{hit} if either $z\in h(X)$ or there exists an $i$ such that $z=d\oplus h(\alpha_i)$ for some $d \in D_i$. In the latter case we will say that $z$ is hit by a dummy ball. 
This modified random process can be seen as ensuring that when we are to finally fix the hash values of the elements of $G_i$ by the last translation with $h(\alpha_i)$, we modify the group by adding dummy balls to ensure that exactly $ |G_i|$ bins are hit by either a ball in $G_i$ or a dummy ball in $D_i$. We let $D=\sum_{i=1}^k|D_i|$ denote the total number of dummy balls.

Let $\mathcal{H}$ denote the event that $y$ is hit and $\mathcal{D}$ denote the event that $y$ is hit by a dummy ball. With the presence of the dummy balls, $\Pr[\mathcal{H}]$ is easy to calculate:
\begin{align*}
\Pr[\mathcal{H}]=1-\prod_{i=1}^k\left(1- \frac{|G_i|}{\bins} \right)\geq 1-\prod_{i=1}^k\left(1- \frac{1}{\bins} \right)^{|G_i|}=\trp.
\end{align*}
Clearly $\Pr[y\in h(X)]\geq \Pr[\mathcal{H}]-\Pr[\mathcal{D}]$ so for a lower bound on $\Pr[y\in h(X)]$ it suffices to  upper bound $\Pr[\mathcal{D}]$. Let $\mathcal{D}_i$ denote the event that $y$ is hit by a dummy ball from $D_i$. We can calculate $\Pr[\mathcal{D}_i]=\sum_{\ell=0}^\infty \Pr[\mathcal{D}_i \mid |D_i|= \ell]\times \Pr[|D_i|=\ell]$. The conditional probability $\Pr[\mathcal{D}_i \mid |D_i|= \ell]$ is exactly $\ell/\bins$ as the choice of $D_i$ only depends on the hash values $(h(\alpha_j))_{j<i}$ and when translated by an XOR with $h(\alpha_i)$ the bin $y$ is hit with probability $|D_i|/\bins$. It follows that $\Pr[\mathcal{D}_i]=\E[|D_i|]/\bins$ and thus that $\Pr[\mathcal{D}]\leq\sum_{i=1}^k\Pr[\mathcal{D}_i]=\E[D]/\bins$. Finally the total number of dummy balls is upper bounded by the number $C$ of internal collisions in the groups, so Lemma~\ref{varlem} gives that $\Pr[\mathcal{D}] \leq \E[C]/\bins \leq \frac{\balls^{2-1/c}}{2\bins^2}$.
This gives the desired lower bound on $p$ (throwing away the factor of $1/2$, in order to simplify the statement in the theorem).  

For the upper bound note that $\Pr[y\in h(X)]\leq \Pr[\mathcal{H}]$ so by Lemma~\ref{ineqlemma}
\begin{align*}
p\leq \Pr[\mathcal{H}] \leq 1- \left(1- \frac{\balls_0}{\bins} \right)^{\balls/\balls_0}.
\end{align*}
Using the inequality $\left(1+\frac{x}{\ell} \right)^\ell\geq e^x\left(1-\frac{x^2}{\ell} \right)$ holding for $\ell\geq 1$ and $|x|\leq \ell$ with $x=-\balls/\bins$ and $\ell=\balls/\balls_0$ (note that $|x|\leq \ell$ as we assumed that $\balls^{1-1/c}\leq \bins$) we obtain that
\begin{align*}
p\leq 1-e^{-\balls/\bins} \left(1- \frac{\balls \cdot \balls_0}{\bins^2} \right)\leq 1-e^{-\balls/\bins} + \frac{\balls \cdot \balls_0}{\bins^2}\leq \trp+\frac{\balls^{2-1/c}}{\bins^2},
\end{align*}
as desired. The bound on $\E[\fbins]$ follows immediately as $\E[\fbins]=\sum_{y\in [\bins]} \Pr[y\in h(X)]$.

Finally consider the case where $y$ is chosen conditioned on $h(q)=z$ for a query ball $q\notin X$ and a bin $z$. Here we may assume that $2\balls^{1-1/c}\leq \bins$ as otherwise the claimed upper bound is at least $1$. We choose the ordering $\prec$ such that the position characters of $q$ are first in the order and such that all groups have size at most $2\balls^{1-1/c}$ which is possible by Lemma~\ref{orderlemma}. Let $\balls_0=\min(\balls,2\balls^{1-1/c})$ denote the maximal possible group size.  Introducing dummy balls the same way as before and repeating the arguments, the probability of the event $\mathcal{H}$ that $y$ is hit satisfies
\begin{align*}
\trp \leq \Pr[\mathcal{H} \mid h(q)=z]\leq 1-\left(1- \frac{\balls_0}{\bins} \right)^{\balls/\balls_0}\leq 1-e^{-\balls/\bins} \left(1- \frac{\balls \cdot \balls_0}{\bins^2} \right)\leq \trp+\frac{2\balls^{2-1/c}}{\bins^2}.
\end{align*}
The desired upper bound follows immediately as $\Pr[y\in h(X) \mid h(q)=z]\leq \Pr[\mathcal{H} \mid h(q)=z]$. \\
For the lower bound we again let $\mathcal{D}$ denote the event that $y$ is hit by a dummy ball and $\mathcal{D}_i$ denote the event that $y$ is hit by a dummy ball from $D_i$. Then
\begin{align*}
\Pr[\mathcal{D}_i\mid h(q)=z] = \sum_{\ell=0}^\infty \Pr[\mathcal{D}_i \mid h(q)=z \wedge|D_i|=\ell] \times\Pr[|D_i|=\ell \mid h(q)=z].
\end{align*}
As before we have that $\Pr[\mathcal{D}_i \mid h(q)=z \wedge|D_i|=\ell] =\ell/\bins$ since the hash values of the position characters of $q$ are fixed before $h(\alpha_i)$. Thus,
\begin{align*}
\Pr[\mathcal{D}_i\mid h(q)=z]=  \sum_{\ell=0}^\infty \frac{\ell}{\bins}\Pr[|D_i|=\ell \mid h(q)=z]=\frac{\E[|D_i|\mid h(q)=z]}{\bins},
\end{align*}
and another union bound gives that 
\begin{align*}
\Pr[\mathcal{D}\mid h(q)=z]\leq \frac{\E[D \mid h(q)=z]}{\bins}\leq \frac{\E[C \mid h(q)=z]}{\bins}\leq \frac{\balls^{2-1/c}}{\bins^2},
\end{align*}
where we in the last step used Lemma~\ref{varlem}.
\end{proof}
We are now going to prove Theorem~\ref{hpthm}. We start out by recalling Azuma's inequality.
\begin{thm}[Azuma's inequality~\cite{Azuma}]
Suppose $(X_i)_{i=0}^k$ is a martingale satisfying that $|X_i-X_{i-1}|\leq s_i$ almost surely for all $i=1,\dots,k$. Let $s=\sum_{i=1}^k s_i^2$. Then for any $t\geq 0$ it holds that
\begin{align*}
\Pr(X_k\geq X_0+t) \leq \exp \left( \frac{-t^2}{2s}\right), \quad \text{and} \quad
\Pr(X_k\leq X_0-t) \leq \exp \left( \frac{-t^2}{2s}\right).
\end{align*}
 \end{thm}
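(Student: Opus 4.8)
The plan is to use the standard exponential-moment (Chernoff) method together with a one-step bound on conditional exponential moments. Write $Y_i = X_i - X_{i-1}$ for the martingale differences, so that by the martingale property $\E[Y_i \mid X_0,\dots,X_{i-1}] = 0$ and by hypothesis $|Y_i| \le s_i$ almost surely. If $s = 0$ the claimed bound is vacuous, so assume $s > 0$. Fix a parameter $\lambda > 0$ to be optimized at the end. The goal is to bound $\E[\exp(\lambda(X_k - X_0))]$ and then apply Markov's inequality.

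First I would establish the key one-step estimate (\emph{Hoeffding's lemma}): if $Y$ is a random variable with $\E[Y] = 0$ and $|Y| \le s$ almost surely, then $\E[e^{\lambda Y}] \le e^{\lambda^2 s^2/2}$. This follows from convexity of $y \mapsto e^{\lambda y}$ on $[-s,s]$: for such $y$ one has the linear upper bound $e^{\lambda y} \le \frac{s-y}{2s}e^{-\lambda s} + \frac{s+y}{2s}e^{\lambda s}$, and taking expectations while using $\E[Y] = 0$ gives $\E[e^{\lambda Y}] \le \frac{1}{2}(e^{-\lambda s} + e^{\lambda s}) = \cosh(\lambda s) \le e^{\lambda^2 s^2/2}$, the last inequality being a term-by-term comparison of the two power series. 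Applying this conditionally on $X_0,\dots,X_{i-1}$ (legitimate since, conditionally, $Y_i$ has mean zero and is bounded by $s_i$) yields $\E[e^{\lambda Y_i} \mid X_0,\dots,X_{i-1}] \le e^{\lambda^2 s_i^2/2}$.

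Next I would iterate. Conditioning on the past at the last step, $\E[e^{\lambda(X_k - X_0)}] = \E\bigl[e^{\lambda(X_{k-1}-X_0)}\,\E[e^{\lambda Y_k}\mid X_0,\dots,X_{k-1}]\bigr] \le e^{\lambda^2 s_k^2/2}\,\E[e^{\lambda(X_{k-1}-X_0)}]$, and repeating down to $X_0$ gives $\E[e^{\lambda(X_k - X_0)}] \le \exp\bigl(\tfrac{\lambda^2}{2}\sum_{i=1}^k s_i^2\bigr) = e^{\lambda^2 s/2}$. Then Markov's inequality yields, for any $t \ge 0$, $\Pr(X_k \ge X_0 + t) \le e^{-\lambda t}\,\E[e^{\lambda(X_k-X_0)}] \le \exp(\lambda^2 s/2 - \lambda t)$, and choosing $\lambda = t/s$ minimizes the exponent, giving $\Pr(X_k \ge X_0 + t) \le e^{-t^2/(2s)}$. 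For the lower tail I would simply observe that $(-X_i)_{i=0}^k$ is again a martingale with the same difference bounds $s_i$, so the upper-tail result applied to it gives $\Pr(X_k \le X_0 - t) = \Pr(-X_k \ge -X_0 + t) \le e^{-t^2/(2s)}$.

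The only genuinely nontrivial ingredient is Hoeffding's lemma, and inside it the scalar inequality $\cosh(\lambda s) \le e^{\lambda^2 s^2/2}$; the conditioning/telescoping and the final optimization over $\lambda$ are routine. I therefore expect the convexity step to be the main obstacle to write cleanly, everything else being bookkeeping.
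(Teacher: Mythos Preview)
Your proof is correct and is the standard exponential-moment argument for Azuma's inequality. Note, however, that the paper does not prove this statement at all: it merely quotes Azuma's inequality as a known result with a citation~\cite{Azuma}, so there is no proof in the paper to compare against.
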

 To apply Azuma's inequality we need to recall a little measure theory. Suppose $(\Omega,\mathcal{F},\Pr)$ is a finite measure space (that is $\Omega$ is finite), and that $Y:\Omega\to \R$ is an $\mathcal{F}$-measurable random variable. A sequence of $\sigma$-algebras $(\mathcal{F}_i)_{i=1}^k$ on $\Omega$ is called a \emph{filter} of the $\sigma$-algebra $\mathcal{F}$ if $\{\emptyset,\Omega\}=\mathcal{F}_0\subseteq \cdots \subseteq \mathcal{F}_k=\mathcal{F}$. Defining $Y_i=\E[Y|\mathcal{F}_i]$, the sequence $(Y_i)_{i=0}^k$ becomes a martingale with $Y_0=\E[Y]$ and $Y_k=Y$. It is for such martingales that we will apply Azuma's inequality.
 
 \begin{proof}[\textbf{Proof of Theorem~\ref{hpthm}}]
By the result by P\v{a}tra\c{s}cu and Thorup~\cite{Pat} we may assume that $\balls\leq C\bins \log \bins$ for some constant $C$  as otherwise all bins are full whp from which the results of the theorem immediately follow.


Let $G_1,\dots,G_k$ be the groups described in Lemma~\ref{orderlemma} and $\alpha_1,\dots,\alpha_k$ be the corresponding position characters. Again we think of the $h(\alpha_i)$ as being fixed sequentially. 
We let $(\Omega,\mathcal{F},\Pr)$ be the underlying probability space when choosing $h$, that is, $\Omega$ is the set of all simple tabulation hash functions, $\mathcal{F}=\mathcal{P}(\Omega)$, and $\Pr$ is the uniform probability measure on $\Omega$. For $i=0,\dots,k$ we define $\mathcal{F}_i=\sigma(h(\alpha_1),\dots, h(\alpha_i))$ to be the $\sigma$-algebra generated by the hash values of the first $i$ position characters. Then $\{\emptyset,\Omega\}=\mathcal{F}_0\subseteq \cdots \subseteq \mathcal{F}_k=\mathcal{F}$ is a filter of $\mathcal{F}$.

Ideally we would hope that for the martingale $(X_i)_{i=0}^k=(\E[\fbins \, | \, \mathcal{F}_i])_{i=0}^k$ we could effectively bound $|X_i-X_{i-1}|$ and thus apply Azuma's inequality. This is however too much to hope for --- the example with keys $[2]\times [\balls/2]$ shows that the hash value of a single position character can have a drastic effect on the conditionally expected number of non-empty bins. To remedy this we will again be using dummy balls but this time in a different way.

First of all, we let $\gamma>0$ be any constant. Since $\balls\leq C \bins \log \bins$, Lemma~\ref{dboundedlemma} gives that there exists a $d=d(\gamma)=O(1)$ such that all groups are $d$-bounded with probability at least $1-\bins^{-\gamma}$. Here is how we use the dummy balls: After having fixed $(h(\alpha_j))_{j\prec i}$ we again look at the set $G_i':=\{h(x\backslash \{\alpha_i\}): x \in G_i\}$ letting $I^{-}=\{i\in [k]: |G_i'|\geq \lceil |G_i|/d \rceil\}$ and $I^{+}=\{i\in [k]: |G_i'|< \lceil |G_i|/d \rceil\}$. For $i\in I^{-}$ we dependently on $(h(\alpha_j))_{j\prec i}$ choose a set $D_i^{-}\subseteq G_i'$ such that $|G_i' \backslash D_i^{-}|=\lceil |G_i|/d \rceil$. Similarly we for $i\in I^{+}$ choose a set $D_i^{+}\subseteq R$ disjoint from $G_i'$ such that $|G_i'\cup D_i^{+}|=\lceil |G_i|/d \rceil$. We say that bin $z$ is \emph{hit} if there exists an $i$ such that either
\begin{itemize}
\item[\textbf{1.}]  $i\in I^{-}$ and $z=y\oplus h(\alpha_i)$ for some $y\in G_i'\backslash D_i^{-}$, or
\item[\textbf{2.}] $i\in I^{+}$ and $z=y\oplus h(\alpha_i)$ for some $y\in G_i'\cup D_i^{+}$.
\end{itemize} 
This modified random process obtained by adding balls if $|G_i'|$ is too large and removing balls if it is too small can be seen as ensuring that when we are to finally fix the hash values of the elements of $G_i$ by the last translation by $h(\alpha_i)$ we first modify the group to ensure that we hit \emph{exactly} $\lceil |G_i|/d \rceil$ bins.

Importantly, we observe that if $G_i$ is $d$-bounded then $|G_i'|\geq  |G_i|/d $ and since $|G_i'|$ is integral $|G_i'|\geq  \lceil |G_i|/d \rceil$. Thus if all groups are $d$-bounded $I^{+}=\emptyset$, and no dummy balls are added.

Letting $H$ denote the number of bins hit, we have that 
\begin{align*}
\E[H]&=\bins \left(1-\prod_{i=1}^k \left(1-\frac{\lceil |G_i|/d \rceil}{\bins} \right) \right)\geq \bins \left(1-\prod_{i=1}^k \left(1-\frac{1}{\bins} \right)^{\lceil |G_i|/d \rceil} \right)\geq \bins \left(1-\left(1-\frac{1}{\bins} \right)^{\balls/d} \right).
\end{align*}
We now wish to apply Azuma's inequality to the martingale $(H_i)_{i=0}^k=(\E[H \mid \mathcal{F}_i])_{i=0}^k$. To do this we require a good upper bound on $|H_i-H_{i-1}|$ and we claim that in fact $|H_i-H_{i-1}|\leq |G_i|$. 
To see this, let the random variable $N_i$ denote the number of bins not hit when the hash values of the first $i$ position characters has been settled. Then  $H_i=\bins-N_i\prod_{j>i} \left(1-\frac{\lceil |G_j|/d \rceil}{\bins} \right)$ and so 
\begin{align*}
|H_i-H_{i-1}|=\prod_{j>i} \left(1-\frac{\lceil |G_j|/d \rceil}{\bins} \right)\left|N_i- \left(1-\frac{\lceil |G_i|/d \rceil}{\bins}\right)N_{i-1} \right|\leq   \left|N_i-N_{i-1}+\frac{\lceil |G_i|/d \rceil\cdot N_{i-1}}{\bins} \right|.
\end{align*}
Now $N_{i-1}-\lceil |G_i|/d \rceil\leq N_i\leq N_{i-1}$ as at least $0$ and most $\lceil |G_i|/d \rceil$ bins are hit after fixing $h(\alpha_i)$ and from this it follows that $|H_i-H_{i-1}| \leq\lceil |G_i|/d \rceil\leq |G_i|$.

Letting $s_i=|G_i|$ we have that $\sum_{i=1}^ks_i^2\leq \balls^{2-1/c}$ by Lemma~\ref{ineqlemma} and thus we can apply Azuma's inequality to obtain that
\begin{align*}
\Pr(H\leq \E[H]-t)\leq \exp\left( \frac{-t^2}{2\balls^{2-1/c}} \right).
\end{align*}
Putting $t=\sqrt{\gamma \cdot2\balls^{2-1/c} \log \bins}$ we obtain that with probability at least $1-\bins^{-\gamma}$
\begin{align*}
H\geq  \ell(\balls,\bins) :=\bins \left(1-\left(1-\frac{1}{\bins} \right)^{\balls/d} \right)-\sqrt{\gamma \cdot2\balls^{2-1/c} \log \bins}.
\end{align*}
As $I^{+}= \emptyset$ with probability at least $1-\bins^{-\gamma}$ and as we in this case have that $\fbins\geq H$ we have that $\fbins \geq \ell(\balls,\bins)$ with probability at least $1-2n^{-\gamma}$. 

The remaining part of proof is just combining what we have together with a little calculus! We first consider the case $\balls \leq \bins$. In this case the lower bound simply states that $\fbins =\Omega(m)$. To see that this bound holds observe that if (for example) $\balls \leq \bins^{1/2}$ then by Lemma~\ref{dboundedlemma2} no bin gets more than a constant number of balls with probability at least $1-\bins^{-\gamma}$. In particular $\fbins =\Omega(m)$ with probability at least $1-\bins^{-\gamma}$. If on the other hand $\balls \geq \bins^{1/2}$ then $\sqrt{\gamma \cdot2\balls^{2-1/c} \log \bins}=o(m)$ and $\ell(\balls,\bins)=\Omega(\balls)-o(\balls)=\Omega(\balls)$ which again gives the desired result.

Finally suppose $\balls \geq \bins$. Let $\alpha:=\left(1-1/\bins\right)^{\bins/(2d)}\leq e^{-1/(2d)}$ and let $\beta$ be a constant so large that $\beta\geq 2d$ and $\bins\left( 1-1/\bins\right)^{\balls/\beta}\geq \frac{1}{1-\alpha} \sqrt{\gamma \cdot2\balls^{2-1/c} \log \bins}$, the last requirement being possible as we assumed $\balls\leq C\bins \log \bins$. Then
\begin{align*}
\frac{\ell(\balls,\bins)}{\bins}&\geq 1-\left(1-\frac{1}{\bins} \right)^{\balls/d} -(1-\alpha)\left( 1-\frac{1}{\bins}\right)^{\balls/\beta} \\
&\geq 1-\left(1-\frac{1}{\bins} \right)^{\balls/\beta}\left(\left(1-\frac{1}{\bins} \right)^{\balls /(2d) } +(1-\alpha)\right) \geq 1-\left(1-\frac{1}{\bins} \right)^{\balls/\beta}. 
\end{align*}
Since $\fbins \geq \ell(\balls,\bins)$ with probability at least $1-2n^{-\gamma}$ this gives the desired result.
\end{proof}

We now prove Theorem~\ref{concthm}.
\begin{proof}[\textbf{Proof of Theorem~\ref{concthm}}]
When $\balls^{1-1/(2c)}\geq \bins$ the probability bounds of the theorem are trivial since they are $\Omega(1)$ when $t\leq n$ . We therefore assume henceforth that $\balls^{1-1/(2c)}\leq \bins$.

Again consider the order $\prec$ obtained from Lemma~\ref{orderlemma} such that for all $i$ we have $|G_i|\leq\balls^{1-1/c}$. We again think of the hash values of the position characters as being fixed in the order obtained from $\prec$. We also introduce dummy balls in exactly the same way as we did in the proof of Theorem~\ref{expthm} using the same definition of a bin being hit. 

Letting $H$ denote the number of bins hit (by an $x\in X$ or a dummy ball) we have that $\E[H]=\bins\left( 1-\prod_{i=1}^k\left(1- \frac{|G_i|}{\bins} \right)\right)$, like in the proof of Theorem~\ref{expthm}, and
\begin{align*}
\trm \leq \E[H] \leq \trm+ \frac{\balls^{2-1/c}}{\bins}. 
\end{align*}
Furthermore letting $\mathcal{F}_i=\sigma(h(\alpha_1),\dots,h(\alpha_i))$ be the $\sigma$-algebra generated by $(h(\alpha_j))_{j\leq i}$, the same argument as in the proof of Theorem~\ref{hpthm} gives that $H_i=\E[H | \mathcal{F}_i]$ is a martingale satisfying that $|H_{i}-H_{i-1}|\leq |G_i|$ for all $i$.
We can thus apply Azuma's inequality with $s_i=|G_i|$ and $s=\sum_{i=1}^k s_i^2\leq \balls^{2-1/c}$ (here we used Lemma~\ref{ineqlemma}) to obtain that 
\begin{align}
&\Pr[H\geq \E[H]+ t] \leq \exp \left(\frac{-t^2}{2\balls^{2-1/c}} \right), \quad \text{and} \label{azb1}\\
&\Pr[H\leq \E[H]- t] \leq \exp \left(\frac{-t^2}{2\balls^{2-1/c}} \right) \label{azb2}.
\end{align}
We now wish to translate this concentration result on the number of bins hit when the dummy balls are included to a concentration result on $\fbins$. We begin with the bound in~\eqref{cbu}. As $\fbins\leq H$ it suffices to bound the probability $\Pr[H\geq \trm +2t]$. Since $\E[H] \leq \trm+ \frac{\balls^{2-1/c}}{\bins}$,
\begin{align*}
\Pr[H\geq \trm +2t]\leq \Pr\left[H-\E[H]\geq 2t-\frac{\balls^{2-1/c}}{\bins} \right],
\end{align*}
so when $t\geq \frac{\balls^{2-1/c}}{\bins}$ the result follows immediately from~\eqref{azb1}. If on the other hand $t< \frac{\balls^{2-1/c}}{\bins}$ then $\frac{t^2}{\balls^{2-1/c}}< \frac{\balls^{2-1/c}}{\bins^2}\leq 1$ and the result is trivial as the right hans size in~\eqref{cbu} can be as large as $\Omega(1)$ which is a valid upper bound on any probability.

We now turn to the proof of~\eqref{cbl}. Letting $\mathcal{E}$ denote the event that $\fbins\leq \mu_0-2t$ and $\mathcal{A}$ the event that $H\leq \trm-t$ we have that
\begin{align*}
\Pr[\fbins\leq \mu_0-2t]=\Pr[\mathcal{E}]\leq \Pr[\mathcal{A}]+\Pr[\mathcal{E} \wedge \neg \mathcal{A}].
\end{align*}
By~\eqref{azb2} and since $\trm \leq \E[H]$ we can upper bound $\Pr[\mathcal{A}]\leq \exp \left(\frac{-t^2}{2\balls^{2-1/c}} \right)$. For the other term we note that $\mathcal{E} \wedge \neg \mathcal{A}$ entails that at least $t$ bins are hit by a dummy ball. In particular the number of dummy balls is at least $t$. As the number $C$ of internal collisions of the groups is an upper bound on the number of dummy balls this in turn implies, $t \leq C$. We may assume that $t\geq \balls^{1-1/(2c)}$ as otherwise~\eqref{cbu} is trivial. As we assumed, $\balls^{1-1/(2c)}\leq \bins$ it follows from Lemma~\ref{varlem} that  $\E[C]\leq \frac{\balls^{2-1/c}}{2\bins}\leq \frac{t \balls^{1-1/(2c)}}{2\bins}\leq t/2$ and so $t-\E[C]\geq t/2$. Applying Chebychev's inequality as well as~\eqref{varcol} of Lemma~\ref{varlem} we thus obtain, 
\begin{align*}
\Pr[\mathcal{E} \wedge \neg \mathcal{A}]\leq \Pr[C\geq t]\leq \Pr[C-\E[C]\geq t/2]\leq \frac{4\var[C]}{t^2}=\OO \left( \frac{\balls^2}{t^2\bins} \right).
\end{align*}
Combining the two bounds completes the proof.
\end{proof}
We promised to argue why we cannot dispose with the term $\balls^2/(\bins t^2)$ in general. Suppose that $\balls=O(\bins)$ and let  $t=\balls^{1/2+\alpha}$ for an $\alpha\in [1/2,1)$ such that $\balls^{1/2+\alpha}\in[ \sqrt{\balls},\balls/2]$, and consider the set of keys $[\balls /t]\times [t]$. With probability $\Omega((\balls/t)^2/\bins)=\Omega(\balls^2/(t^2\bins))$ we have that $h_0(a_0)=h_0(a_1)$ for two distinct $a_0,a_1\in [\balls/t]$. Conditioned on this event the expected number of non-empty bins is at most $\bins\left(1-\left(1- \frac{\balls/t-1}{\bins}\right)^t\right)$ which can be shown to be $\trm-\Omega(t)$ by standard calculus. The additive term $\Omega(t)$ comes from the fact that the $t$ pairs of colliding keys $\{(a_0,b),(a_0,b)\}_{b \in [t]}$ causes the expected number of non-empty bins to decrease by $\Omega(t)$ when $\balls=\OO(\bins)$. Thus the deviation by $\Omega(t)$ from $\trm$ occurs with probability  $\Omega(\balls^2/(\bins t^2))$.
\\

We will now set the stage for the proof of Theorem~\ref{concthm2}. As mentioned in the introduction we require a stronger martingale inequality than that by Azuma. The one we use is due to Mcdiarmid~\cite{Colin}. Again assume that $(\Omega,\mathcal{F},\Pr)$ is a finite probability space, that $X:\Omega \to \R$ is an $\mathcal{F}$-measurable random variable, that $(\mathcal{F}_i)_{i=0}^k$ is a filter of $\mathcal{F}$, and that $X_i=\E[X \mid \mathcal{F}_i]$. Also recall the definition of conditional variance: If $\mathcal{G}\subseteq \mathcal{F}$ is a $\sigma$-algebra, then $\var[X \mid \mathcal{G}]=\E[(X-\E[X\mid \mathcal{G}])^2 \mid \mathcal{G}]=\E[X^2 \mid \mathcal{G}]-\E[X\mid \mathcal{G}]^2$.
\begin{thm}[Mcdiarmid~\cite{Colin}]\label{mcdineq}
Assume that $\var[X_i \mid \mathcal{F}_{i-1}]\leq \sigma_i^2$ for $i=1,\dots,k$ and further that $X_i-X_{i-1}\leq M$ for $i=1,\dots,k$. Then for $t\geq 0$,
$$
\Pr[X-\E[X]\geq t]\leq \exp\left(\frac{-t^2}{2\left(\sum_{i=1}^k \sigma_i^2+Mt/3\right)} \right).
$$
\end{thm}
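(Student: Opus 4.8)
The plan is to prove this by the classical exponential-moment (Chernoff) method for martingales, in the Freedman--Bennett style. Write $Y_i=X_i-X_{i-1}$ for the martingale differences. Since $X_i=\E[X\mid\mathcal{F}_i]$, the tower property gives that $(X_i)$ is a martingale, hence $\E[Y_i\mid\mathcal{F}_{i-1}]=0$; moreover $\E[Y_i^2\mid\mathcal{F}_{i-1}]=\var[X_i\mid\mathcal{F}_{i-1}]\leq\sigma_i^2$ by definition of the conditional variance, and $Y_i\leq M$ by hypothesis. (If $M\leq 0$ then $Y_i=0$ a.s.\ and there is nothing to prove, so assume $M>0$.) It then suffices to bound $\E[e^{\lambda(X-\E[X])}]$ for $\lambda\geq 0$ and invoke Markov's inequality.

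The key step is a one-variable moment lemma: if $\mathcal{G}\subseteq\mathcal{F}$ is a $\sigma$-algebra, $Y$ satisfies $\E[Y\mid\mathcal{G}]=0$, $Y\leq M$, and $\E[Y^2\mid\mathcal{G}]\leq\sigma^2$, then for every $\lambda\geq 0$
$$\E[e^{\lambda Y}\mid\mathcal{G}]\leq\exp\!\left(\frac{\sigma^2}{M^2}\bigl(e^{\lambda M}-1-\lambda M\bigr)\right).$$
This follows from the classical fact that $x\mapsto(e^{x}-1-x)/x^2$ (extended by $1/2$ at $0$) is non-decreasing on $\R$ — which one checks by showing $p(x):=(x-2)e^x+x+2$ has the sign of $x^3$, since $p(0)=p'(0)=0$ and $p''(x)=xe^x$. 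Consequently $y\leq M$ gives the pointwise bound $e^{\lambda y}\leq 1+\lambda y+\frac{e^{\lambda M}-1-\lambda M}{M^2}y^2$; taking conditional expectations, using $\E[Y\mid\mathcal{G}]=0$, $\E[Y^2\mid\mathcal{G}]\leq\sigma^2$, and $1+x\leq e^x$ yields the claim.

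Next I would iterate along the filter. Since $Y_1,\dots,Y_{k-1}$ are $\mathcal{F}_{k-1}$-measurable and $\frac{e^{\lambda M}-1-\lambda M}{M^2}$ is a deterministic constant,
$$\E\!\left[e^{\lambda\sum_{i=1}^{k}Y_i}\right]=\E\!\left[e^{\lambda\sum_{i=1}^{k-1}Y_i}\,\E[e^{\lambda Y_k}\mid\mathcal{F}_{k-1}]\right]\leq\exp\!\left(\tfrac{\sigma_k^2}{M^2}(e^{\lambda M}-1-\lambda M)\right)\E\!\left[e^{\lambda\sum_{i=1}^{k-1}Y_i}\right],$$
and repeating down to $i=1$ gives, with $V:=\sum_{i=1}^k\sigma_i^2$, the bound $\E[e^{\lambda(X-\E[X])}]\leq\exp\!\bigl(\tfrac{V}{M^2}(e^{\lambda M}-1-\lambda M)\bigr)$. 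Markov's inequality then yields, for all $\lambda\geq 0$,
$$\Pr[X-\E[X]\geq t]\leq\exp\!\left(\tfrac{V}{M^2}(e^{\lambda M}-1-\lambda M)-\lambda t\right).$$

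Finally I would turn this into the stated Bernstein form. The cleanest route is to use the elementary inequality $e^{x}-1-x\leq\frac{x^2/2}{1-x/3}$ for $0\leq x<3$ (verified by comparing power series: $\frac{1}{j!}\leq\frac{1}{2\cdot3^{j-2}}$ for $j\geq 2$), and then choose $\lambda=\frac{t}{V+Mt/3}$, which satisfies $0\leq\lambda M<3$; a short computation (substituting $1-\lambda M/3=V/(V+Mt/3)$) collapses the exponent to exactly $-\frac{t^2}{2(V+Mt/3)}$, as required. Alternatively one may optimise $\lambda$ exactly, obtaining Bennett's inequality $\Pr[X-\E[X]\geq t]\leq\exp\!\bigl(-\tfrac{V}{M^2}h(\tfrac{Mt}{V})\bigr)$ with $h(u)=(1+u)\ln(1+u)-u$, and then apply $h(u)\geq\frac{u^2}{2(1+u/3)}$. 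I expect the one-variable moment lemma — specifically the monotonicity of $(e^x-1-x)/x^2$ — together with the scalar inequality used in the final simplification to be the only real content; the martingale bookkeeping is just the tower property.
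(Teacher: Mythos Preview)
Your proof is correct and is the standard Freedman--Bennett exponential-moment argument for martingales with bounded increments and bounded conditional variance. The paper itself does not supply a proof of this theorem; it is quoted as a known result from McDiarmid's survey~\cite{Colin} and used as a black box in the proof of Theorem~\ref{concthm2}, so there is no in-paper argument to compare against. Your derivation is exactly the kind of proof one finds in that reference: the key moment lemma via the monotonicity of $(e^x-1-x)/x^2$, the telescoping along the filtration, and the reduction from the Bennett form to the Bernstein form via $e^x-1-x\leq \frac{x^2/2}{1-x/3}$ with the specific choice $\lambda=t/(V+Mt/3)$ are all standard and correctly executed.
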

With this tool in hand we are ready to prove Theorem~\ref{concthm2}, the main technical challenge being to argue why we can apply Theorem~\ref{mcdineq}.
\begin{proof}[\textbf{Proof of Theorem~\ref{concthm2}}]
We introduce dummy balls exactly in the proof of Theorem~\ref{concthm} and Theorem~\ref{expthm} and consider the same martingale $(H_i)_{i=0}^k=(\E[H|\mathcal{F}_i])_{i=0}^k$, where $H$ is the number of bins hit (by either a dummy ball or a ball from $X$). We already saw that $|H_i-H_{i-1}|\leq |G_i|\leq \balls^{1-1/c}$, so we let $M:=\balls^{1-1/c}$. What remains is to upper bound $\var[H_i \mid \mathcal{F}_{i-1}]$. First note that
\begin{align*}
\var[H_i\mid \mathcal{F}_{i-1}]=&\E[(H_i-\E[H_i \mid \mathcal{F}_{i-1}])^2 \mid \mathcal{F}_{i-1}] \\
=&\E[(H_i-H_{i-1})^2 \mid \mathcal{F}_{i-1}].
\end{align*}
We denote by $N_i$ the number of bins that are empty after the hash values of the first $i$ position characters has been settled. Then, by the same reasoning as in the proof of Theorem~\ref{hpthm}, we have that
\begin{align*}
|H_i-H_{i-1}|\leq   \left|N_i-N_{i-1}+\frac{ |G_i| \cdot N_{i-1}}{\bins} \right|.
\end{align*}
Now let $T_i=|G_i|-N_{i-1}+N_i$ denote the number of bins hit in the $i$'th step that were already hit in the $(i-1)$'st step. As $\E[T_i \mid \mathcal{F}_{i-1}\,]=(\bins-N_{i-1})|G_i|/\bins$, the above inequality reads
$$
|H_i-H_{i-1}| \leq |T_i-\E[T_i \mid \mathcal{F}_{i-1}]|,
$$
and so, 
$$
\var[H_i \mid \mathcal{F}_{i-1}]\leq \var[T_i \mid \mathcal{F}_{i-1}]\leq \E[T_i^2 \mid \mathcal{F}_{i-1}].
$$
Now, $T_i^2$ counts the number of 2-tuples $(y,z)$ with $y,z\in\{h(x\backslash \{\alpha_i\}): x \in G_i\}\cup D_i$ such that $h(y)$ and $h(z)$ are already hit after the $(i-1)$'st step. Conditioned on $\mathcal{F}_{i-1}$ the probability that this occurs for a given such pair is at most $\frac{\bins-N_{i-1}}{\bins}\leq \frac{\balls}{\bins}$, and there are exactly $|G_i|^2$ such pairs.  Hence
$$
\var[H_i \mid \mathcal{F}_{i-1}]\leq \frac{\balls}{\bins}|G_i|^2:=\sigma_i^2.
$$
By Lemma~\ref{varlem}, $\sum_{i=1}^k \sigma_i^2\leq \frac{\balls^{3-1/c}}{\bins}$ so Theorem~\ref{mcdineq} gives that for $t\geq 0$
\begin{align}\label{mcdb1}
&\Pr[H-\E[H]\geq t]\leq \exp\left(\frac{-t^2}{2\left(\frac{\balls^{3-1/c}}{\bins}+\balls^{1-1/c}t/3\right)} \right) \\
\leq& \exp \left(-\min \left\{ \frac{t^2}{4\frac{\balls^{3-1/c}}{\bins}},\frac{3t}{2\balls^{1-1/c}} \right\} \right). \nonumber
\end{align}
As $\E[H]\leq \trm+\frac{\balls^{2-1/c}}{\bins}$ this is also an upper bound on $\Pr\left[\fbins \geq \trm+t+\frac{\balls^{2-1/c}}{\bins}\right]$. Now the same argument as in the proof of Theorem~\ref{concthm} leads to the upper bound~\eqref{cbu2}.
\\
\\
Finally, to prove~\eqref{cbl2} we use the same strategy as above but this time we define $H'=-H$ and the martingale $(H'_i)_{i=0}^k=(\E[H' \mid \mathcal{F}_{i} ])_{i=0}^k$. Then $|H_i'-H_{i-1}'|=|H_i-H_{i-1}|\leq M$ and $\var[H_i' \mid \mathcal{F}_{i-1}]=\var[H_i \mid \mathcal{F}_{i-1}]$ for $i=1,\dots,k$, so we get a bound as in~\eqref{mcdb1}, but this time on $\Pr[H'-\E[H']\geq t]=\Pr[H-\E[H]\leq -t]$. 

As in the proof of Theorem~\ref{concthm}, the event $\fbins \leq \trm -t$ implies that either $\mathcal{A}$: $H-\E[H]\leq -t/2$ or $\mathcal{B}$: the number of internal collisions $C$ is at least $t/2$. 
$\Pr[\mathcal{A}]$ is bounded using~\eqref{mcdb1}, giving us the first term of the bound in~\eqref{cbl2}. 
For $\Pr[\mathcal{B}]$, note that we may assume that $t\geq 4\balls^{1-1/c}$ as otherwise~\eqref{mcdb1} is trivial. In that case $\E[C]\leq \balls^{2-1/c}/\bins\leq \frac{t}{4}\frac{\balls}{\bins}\leq \frac{t}{4}$, so $t/2-\E[C]\geq t/4$. Lemma~\ref{varlem} thus gives that $\Pr[\mathcal{B}]=\OO(\frac{\balls^2}{\bins t^2})$ --- the second term in the bound~\eqref{cbl2}. The proof is complete.
\end{proof}

\subsection{Proofs of technical lemmas}\label{techlemsec}
For proving Lemma~\ref{varlem} and Lemma~\ref{ineqlemma} we need to briefly discuss the independence of simple tabulation. In the notion of $k$-independence introduced by Wegman and Carter~\cite{WC} simple tabulation is only $3$-independent as shown by the set of keys $S=\{(a_0,b_0),(a_0,b_1),(a_1,b_0)(a_1,b_1)\}$. Indeed $\bigoplus_{x\in S}h(x)=0$ showing that the keys do not hash independently. The issue is that since each position character appears an even number of times in $S$ the addition over $\Z_2$ causes the terms to cancel out. This property in a sense characterises dependencies of keys as shown by Thorup and Zhang~\cite{Zhang}

\begin{lemma}[Thorup and Zhang~\cite{Zhang}]\label{depchar}
The keys $x_1,\dots,x_k\in U$ are dependent if and only if there exists a non-empty subset $I\subseteq \{1,\dots,k\}$ such that each position character in $(x_i)_{i\in I}$ appears an even number of times. In this case we have that $\bigoplus_{i\in I}h(x_i)=0$.
\end{lemma}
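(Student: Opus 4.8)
The plan is to recast the statement as an exercise in linear algebra over $\Z_2$. Write $P$ for the finite set of position characters occurring in $x_1,\dots,x_k$ and associate to each key $x_j$ its incidence vector $v_j\in\Z_2^P$, with $(v_j)_\alpha=1$ exactly when $\alpha\in x_j$. Since $R=[2^r]$ with $\oplus$ being coordinatewise addition in $\Z_2^r$, and since $h(x_j)=\bigoplus_{\alpha\in x_j}h(\alpha)$ where the values $h(\alpha)=h_i(a)$ (for $\alpha=(i,a)$) are mutually independent and uniform in $\Z_2^r$, each $h(x_j)$ is a $\Z_2$-linear function of the random vector $(h(\alpha))_{\alpha\in P}$. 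I would work one output bit at a time: the bits in a fixed position $b$ of the various $h(\alpha)$ are i.i.d.\ uniform in $\Z_2$, and the $b$-th bit of $h(x_j)$ equals $\langle v_j,w\rangle$ where $w\in\Z_2^P$ collects those bits. As distinct bit positions involve disjoint blocks of the underlying random bits, the $r$ bit-slices of $(h(x_1),\dots,h(x_k))$ are i.i.d.; hence $h(x_1),\dots,h(x_k)$ are mutually independent (equivalently, uniform on $R^k$, each $x_j$ being nonempty) if and only if, for one — hence every — bit position, the $\Z_2$-linear map $L\colon w\mapsto(\langle v_j,w\rangle)_{j=1}^k$ from $\Z_2^P$ to $\Z_2^k$ is surjective, using that the image of the uniform distribution under a linear map is uniform on the image subspace.

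Next I would invoke the standard duality that a $\Z_2$-linear map is surjective iff its transpose is injective; here the transpose of $L$ sends the $j$-th standard basis vector of $\Z_2^k$ to $v_j\in\Z_2^P$. Therefore $h(x_1),\dots,h(x_k)$ are mutually independent iff $v_1,\dots,v_k$ are linearly independent over $\Z_2$. Negating, the keys are dependent iff there is a nonempty $I\subseteq\{1,\dots,k\}$ with $\sum_{j\in I}v_j=0$ in $\Z_2^P$; unwinding the definition of the $v_j$, the coordinate of $\sum_{j\in I}v_j$ at $\alpha$ is the parity of $n_\alpha:=|\{j\in I:\alpha\in x_j\}|$, so the condition says precisely that every position character occurring in $(x_i)_{i\in I}$ occurs an even number of times — the claimed combinatorial condition.

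For the last assertion, suppose $I$ is such a subset. Then
\[
\bigoplus_{i\in I}h(x_i)=\bigoplus_{i\in I}\bigoplus_{\alpha\in x_i}h(\alpha)=\bigoplus_{\alpha\in P}\Bigl(\underbrace{h(\alpha)\oplus\cdots\oplus h(\alpha)}_{n_\alpha\text{ times}}\Bigr),
\]
and since every $n_\alpha$ is even each inner term vanishes, so the whole expression is $0$ — and this is a deterministic identity, valid for every realization of $h$.

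I expect the only real subtlety to lie in the first paragraph: one must carefully reduce mutual independence of the $h(x_j)$ to surjectivity of a single $\Z_2$-linear map, which uses both that the $h(\alpha)$ are genuinely independent and uniform (so that linear images of the uniform distribution are uniform on the image subspace) and that the $r$ bit positions decouple, and one must note that ``mutually independent'' is the correct reading of ``not dependent'' because each individual $h(x_j)$ is already uniform on $R$ as $x_j\neq\emptyset$. Once this reduction is set up, the remainder is the elementary rank/transpose argument together with the one-line XOR computation above.
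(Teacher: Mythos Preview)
The paper does not prove this lemma; it is quoted from Thorup and Zhang~\cite{Zhang} and used as a black box, so there is no in-paper argument to compare against.

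Your proof is correct. The reduction to linear algebra over $\Z_2$ is the natural one: identifying each key with its incidence vector $v_j\in\Z_2^P$, you correctly argue that the joint law of $(h(x_1),\dots,h(x_k))$ is the $r$-fold independent product (over output bits) of the law of $w\mapsto(\langle v_j,w\rangle)_{j=1}^k$ with $w$ uniform in $\Z_2^P$, and that this product is uniform on $R^k$ iff the single-bit map is onto $\Z_2^k$, iff the $v_j$ are linearly independent over $\Z_2$. The transpose/rank argument is standard, and the translation of ``$\sum_{j\in I}v_j=0$'' into ``each position character appears an even number of times in $(x_i)_{i\in I}$'' is exactly the combinatorial condition in the statement. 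The final XOR identity is the trivial direction and your one-line computation is fine. Your care in equating ``mutually independent'' with ``uniform on $R^k$'' via the observation that each $h(x_j)$ is individually uniform (since $x_j$ contains at least one position character) is precisely the point that makes the two directions symmetric; nothing is missing.
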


For keys $x,y\in U$ we write $x\oplus y$ for the symmetric difference of $x$ and $y$ when viewed as sets of position characters. Then the property that each position character appearing an even number of times in $(x_i)_{i\in I}$ can be written as  $\bigoplus_{i\in I}x_i=\emptyset$. As shown by Dahlgaard et al.~\cite{Dahl2} we can efficiently bound the number of such tuples $(x_i)_{i\in I}$.
\begin{lemma}[Dahlgaard et al.~\cite{Dahl2}]\label{deplemma} Let $A_1,\dots,A_{2t}\subseteq U$. The number of $2t$-tuples $(x_1,\dots,x_{2t})\in A_1\times \cdots \times A_{2t}$ such that $x_1\oplus \cdots \oplus x_{2t}=\emptyset$ is at most $((2t-1)!!)^c\prod_{i=1}^{2t}\sqrt{|A_i|}$. Here $a!!$ denotes the product of all the positive integers in $\{1,\dots,a\}$ having the same parity as $a$. 
\end{lemma}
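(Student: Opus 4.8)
The plan is to reduce the count to a sum over tuples of perfect matchings and then, for a fixed such tuple, bound the number of compatible key-tuples by induction on the number of characters $c$ via the Cauchy--Schwarz inequality.

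\emph{Reduction to fixed matchings.} Write each key as $x=(x[0],\dots,x[c-1])$ with $x[j]\in\Sigma$. The condition $x_1\oplus\cdots\oplus x_{2t}=\emptyset$ says precisely that for every coordinate $j\in[c]$ and every character $a\in\Sigma$ an even number of the $x_i$ satisfy $x_i[j]=a$. Hence, for each coordinate $j$ we can pair the indices $\{1,\dots,2t\}$ into a perfect matching $M_j$ with $x_a[j]=x_b[j]$ for all $\{a,b\}\in M_j$ (match equal characters in pairs). A set of $2t$ elements has exactly $(2t-1)!!$ perfect matchings, so there are $((2t-1)!!)^c$ tuples $(M_0,\dots,M_{c-1})$ of matchings, and every tuple $(x_1,\dots,x_{2t})$ counted in the lemma is compatible with at least one of them. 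So it suffices to show that for every fixed tuple $(M_0,\dots,M_{c-1})$ of perfect matchings on $[2t]$, the number $N$ of $(x_1,\dots,x_{2t})\in A_1\times\cdots\times A_{2t}$ with $x_a[j]=x_b[j]$ for all $j\in[c]$ and all $\{a,b\}\in M_j$ satisfies $N\le\prod_{i=1}^{2t}\sqrt{|A_i|}$.

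\emph{The matching bound, by induction on $c$.} For $c=1$ the constraints say $x_a=x_b$ for $\{a,b\}\in M_0$, so $N=\prod_{\{a,b\}\in M_0}|A_a\cap A_b|\le\prod_{\{a,b\}\in M_0}\sqrt{|A_a|\,|A_b|}=\prod_{i=1}^{2t}\sqrt{|A_i|}$. For $c>1$, peel off the last coordinate: for $s\in\Sigma$ and index $i$ set $A_i^{(s)}=\{x\in A_i : x[c-1]=s\}$, so $A_i=\bigsqcup_{s\in\Sigma}A_i^{(s)}$. Every valid tuple is obtained by first fixing, for each pair $\{a,b\}\in M_{c-1}$, the common value $s_{ab}=x_a[c-1]=x_b[c-1]$, and then choosing the projections of the $x_i$ onto the first $c-1$ coordinates from the corresponding sets $A_i^{(s_{ab})}$ subject to $M_0,\dots,M_{c-2}$ (projecting away the fixed last coordinate changes no cardinalities). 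The induction hypothesis bounds the number of such projected tuples, for fixed values $s_{ab}$, by $\prod_i\sqrt{|A_i^{(s_{ab})}|}$. Summing over the choices of the $s_{ab}$ and grouping by the pairs of $M_{c-1}$,
\[
N\ \le\ \prod_{\{a,b\}\in M_{c-1}}\ \sum_{s\in\Sigma}\sqrt{|A_a^{(s)}|}\,\sqrt{|A_b^{(s)}|}\ \le\ \prod_{\{a,b\}\in M_{c-1}}\Big(\sum_{s\in\Sigma}|A_a^{(s)}|\Big)^{1/2}\Big(\sum_{s\in\Sigma}|A_b^{(s)}|\Big)^{1/2}\ =\ \prod_{i=1}^{2t}\sqrt{|A_i|},
\]
the second inequality being Cauchy--Schwarz and the last equality using $\sum_{s}|A_i^{(s)}|=|A_i|$. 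Combined with the reduction, this proves the lemma.

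\emph{Main obstacle.} There is essentially one idea --- split off a coordinate and apply Cauchy--Schwarz pair by pair --- and the rest is bookkeeping. The point that needs care is the inductive step: one must verify that, for a \emph{fixed} tuple of matchings, a valid key-tuple corresponds bijectively to a choice of per-pair values on the peeled coordinate together with a valid projected tuple, so that the sum over $s$ neither over- nor under-counts; and in the reduction it is harmless that a key-tuple may be compatible with many matching-tuples, since we only want an upper bound and $((2t-1)!!)^c$ counts all such tuples.
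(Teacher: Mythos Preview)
Your proof is correct. Note that the paper does not supply its own proof of this lemma --- it is quoted from Dahlgaard et al.\ --- so there is no in-paper argument to compare against; your argument is in fact the same one given in the cited source: first reduce to a per-coordinate choice of perfect matching (accounting for the $((2t-1)!!)^c$ factor), and then, for a fixed matching tuple, peel off one coordinate at a time and apply Cauchy--Schwarz to each matched pair.
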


We now provide the proofs of Lemma~\ref{varlem} and Lemma~\ref{ineqlemma}. Since we need Lemma~\ref{ineqlemma} in the proof of Lemma~\ref{varlem} we prove that first.
\begin{proof}[\textbf{Proof of Lemma~\ref{ineqlemma}}]
We prove the following more general statement: Let $f:[0,\balls_0]\to \R$ be convex with $f(0)=0$. Let $0\leq g_1,\dots,g_k\leq \balls_0$ be such that $m=\sum_{i=1}^kg_i$. Define $S:=\sum_{i=1}^k f(g_i)$. Then  $S\leq(\balls/\balls_0)f(\balls_0)$.

To see why the statement holds note that by convexity, $f(x)+f(y)\leq f(x-t)+f(y+t)$ if $0\leq t \leq x \leq y\leq \balls_0-t$. To maximize $S$ we thus have to set $k=\lceil\balls/ \balls_0 \rceil$, $g_1=\cdots=g_{k-1}=\balls_0$ and $g_k=\balls-\sum_{i=1}^{k-1}g_i=\eps \balls_0$, where $\eps\in[0,1)$. It follows that
$$
S\leq\left( \frac{\balls}{\balls_0}-\eps \right) f(\balls_0)+ f \left(\eps \balls_0 \right). 
$$
Finally $f(\eps\balls_0)\leq \eps f(\balls_0)$ using convexity and that $f(0)=0$, so $S\leq(\balls/\balls_0)f(\balls_0)$ as desired.

The first inequality~\eqref{simineq1} of the lemma follows immediately from the above statement with $f(x)=x^{\alpha}$ which is convex since $\alpha\geq 1$. 
For inequality~\eqref{simineq2} we may assume that $\bins>\balls_0$ as the result is trivial when $\bins=\balls_0$. We then define $f(x)=-\log(1-x/n)$ which is convex with $f(0)=0$. Then 
$$
S=-\sum_{i=1}^k \log\left(1-\frac{g_i}{n}\right)\leq -\frac{\balls}{\balls_0}\log\left(1-\frac{\balls}{\bins}\right),
$$ 
which upon exponentiation leads to inequality~\eqref{simineq2}.
\end{proof}

\begin{proof}[\textbf{Proof of Lemma~\ref{varlem}}]
We define $g_i=|X_i|$ for $i\in [k]$. Now~\eqref{expcol} is easily checked. Indeed, since simple tabulation is $2$-independent,
\begin{align*}
\E [C]=\sum_{i=1}^k \binom{g_i}{2}\frac{1}{\bins}\leq \frac{1}{2\bins}\sum_{i=1}^t g_i^2\leq \frac{\balls \cdot \balls_0}{2\bins},
\end{align*}
where in the last step we used Lemma~\ref{ineqlemma}. The last statement of the lemma concerning $\E[C \mid h(q)=z]$ follows from the same argument this time however using that simple tabulation is $3$-independent.

We now turn to~\eqref{varcol}. Writing $\var[C]=\E[C^2]-(\E[C])^2$ our aim is to bound $\E[C^2]$. 
Note that $C^2$ counts the number of tuples $(\{x,y\},\{z,w\})$ such that $x\neq y$ and $z\neq w$ but $h(x)=h(y)$ and $h(z)=h(w)$ and furthermore $x,y\in G_i$ and $z,w\in G_j$ for some $i,j\in [k]$. We denote the set of such tuples $T$ and for $\tau=(\{x,y\},\{z,w\})\in T$ we let $X_{\tau}$ be the indicator for the event that both $h(x)=h(y)$ and $h(z)=h(w)$. Then 
\begin{align}\label{expproof1}
\E[C^2]=\sum_{\tau\in T} \Pr(X_{\tau}=1).
\end{align}
We now partition $T$ by letting
\begin{itemize}
\item $T_1$ be the elements of $T$ for which $\{x,y\}=\{z,w\}$.
\item $T_2$ be the elements of $T$  for which $|\{x,y,z,w\}|=3$.
\item $T_3$ be the elements of $T$ for which $x,y,z,w$ are distinct and independent.
\item $T_4$ be the elements of $T$ for which $x,y,z,w$ are distinct and dependent and there is an $i\in [k]$ such that $x,y,z,w \in G_i$.
\item $T_5$ be the remaining elements of $T$, that is, those element $(\{x,y\},\{z,w\})$ such that $x,y,z,w$ are distinct and dependent and such that $\{x,y\}\subseteq G_i$ and $\{z,w\} \subseteq G_j$ for some distinct $i,j\in [k]$.
\end{itemize}
Putting $S_j=\sum_{\tau\in T_j} \Pr(X_{\tau}=1)$ the sum in~\eqref{expproof1} can be written as $\sum_{j=1}^5 S_j$ and we can efficiently upper bound each of the inner sums as we now show. Clearly,
\begin{align*}
S_1=\sum_{i=1}^k\binom{g_i}{2}\frac{1}{\bins}=\E[C].
\end{align*}
For the second sum we use that simple tabulation is $3$-independent and that $|\{x,y,z,w\}|=3$ implies that $x,y,z,w$ belongs to the same group $G_i$ for some $i\in [k]$. Hence
\begin{align*}
S_2=\sum_{i=1}^k \binom{g_i}{2}\cdot 2 \cdot  \binom{g_i-2}{1} \frac{1}{\bins^2}\leq \frac{1}{\bins^2}\sum_{i=1}^k g_i^3\leq \frac{\balls \cdot \balls_0^2}{\bins^2},
\end{align*}
again using Lemma~\ref{ineqlemma} to bound the sum of cubes. Finally we upper bound $S_3$ as
\begin{align*}
S_3\leq\frac{1}{\bins^2} \left(\sum_{i=1}^k \binom{g_i}{2}\binom{g_i-2}{2} +\sum_{i,j \in [k]: i\neq j}\binom{g_i}{2}\binom{g_j}{2}\right)\leq \frac{1}{\bins^2} \left( \sum_{i=1}^k \binom{g_i}{2}\right)^2=  \E[C]^2.
\end{align*}
Note that in the first three steps we have not been using anything about simple tabulation except it being $3$-independent. However, if $(\{x,y\},\{z,w\})\in T_4\cup T_5$ then by Lemma~\ref{depchar} we have that $x\oplus y\oplus z \oplus w=\emptyset$ and thus that $h(x)=h(y)$ exactly if $h(z)=h(w)$ which happens with probability $\bins^{-1}$. Thus in this case we have to efficiently bound the sizes of $T_4$ and $T_5$. Luckily Lemma~\ref{deplemma} comes to our rescue and we can bound
\begin{align*}
S_4+S_5\leq \frac{3^c}{\bins} \left(\sum_{i=1}^kg_i^2+\sum_{i,j\in [k]: i\neq j}g_ig_j\right)=\frac{3^c}{\bins}\left( \sum_{i=1}^k g_i\right)^2=\frac{3^c\balls^2}{\bins}.
\end{align*}
Combining all this we find that 
\begin{align*}
\var[C]=\E[C^2]-\E[C]^2\leq \E [C]+\frac{\balls \cdot \balls_0^2}{\bins^2}+\frac{3^c\balls^2}{\bins}\leq \frac{\balls \cdot \balls_0^2}{\bins^2}+\frac{(3^c+1)\balls^2}{\bins}, 
\end{align*}
as desired.
\end{proof}

\section{Handling bins consisting of many subbins}\label{Projsec}
In this section we show how to modify the proof of Theorem~\ref{expthm} to obtain Theorem~\ref{expthm2}.
\begin{proof}[Proof of Theorem~\ref{expthm2}]
We may assume that $\rho\balls^{1-1/c}\leq 1$ as otherwise the result is trivial.

As usual we consider the ordering on the position characters, $\alpha_1\prec \cdots \prec \alpha_k$, obtained from Lemma~\ref{orderlemma}, and we fix the values $h(\alpha_i)$ in this order. Suppose that $(h(\alpha_j))_{j<i}$ are fixed and let $V_i=\{y\in [2^r] \mid \exists x \in G_i:h(x \backslash \{\alpha_i\})+y \in S \}$ denote those hash values $h(\alpha_i)$ that would cause $S\cap h(G_i)\neq \emptyset$. Note that $V_i$ is a random variabel depending only on $(h(\alpha_j))_{j<i}$. Let $D_i\subseteq [2^r]\backslash V_i$ be a set of \emph{dummy hash values} chosen dependently on $(h(\alpha_j))_{j<i}$ such that $(|D_i|+|V_i|)/2^r=\rho |G_i|$. As $|G_i|\leq \balls^{1-1/c}$ and so $\rho |G_i|\leq  \rho \balls^{1-1/c} \leq 1$ this is in fact possible. We say that $S$ is \emph{hit} if there exists and $i\in\{1,\dots,k\}$ such that $h(\alpha_i)\in V_i\cup D_i$, and we denote this event $\mathcal{H}$. Defining $\balls_0=\balls^{1-1/c}$ we then have
$$
 \Pr[\mathcal{H}]=1-\prod_{i=1}^k (1-|G_i|\rho)\leq 1-\left(1-\balls_0 \rho\right)^{\balls/\balls_0}\leq 1-e^{-\balls \rho}(1-\balls \balls_0\rho^2)\leq \trp' +\balls^{2-1/c} \rho^2,
$$
using the same inequality as in the proof of Theorem~\ref{expthm}. This is clearly also an upper bound on $p=\Pr[h(X)\cap S\neq \emptyset]=\Pr[\bigcup_{i=1}^k (h(\alpha_i)\in V_i)]$.

Now for the lower bound: For $i\in \{1,\dots,k\}$ we let $\mathcal{D}_i$ and $\mathcal{R}_i$ denote events that $h(\alpha_i) \in D_i$  and that $h(\alpha_i) \in V_i$ respectively. Then 
$$
\Pr\left[\bigcup_{i=1}^k \mathcal{D}_i \right]\leq \sum_{i=1}^k\Pr[\mathcal{D}_i]=\sum_{i=1}^k (\Pr[\mathcal{R}_i\cup \mathcal{D}_i]-\Pr[\mathcal{R}_i])= \rho \balls-\sum_{i=1}^k \Pr[\mathcal{R}_i].
$$
By the Bonferroni inequality, and $2$-independence
$$
\Pr[\mathcal{R}_i]=\Pr[h(G_i)\cap S \neq \emptyset]\geq |G_i| \rho-\binom{|G_i|}{2}\rho^2,
$$
so it follows that $\Pr[\bigcup_{i=1}^k \mathcal{D}_i]\leq \sum_{i=1}^k \binom{|G_i|}{2}\rho^2 \leq \balls^{2-1/c}\rho^2$. Finally
$$
p\geq \Pr[\mathcal{H}]-\Pr\left[\bigcup_{i=1}^k \mathcal{D}_i\right] \geq\trp'-\balls^{2-1/c}\rho^2,
$$
which completes the proof of the lower bound. 

The case where we condition on the event $\mathcal{E}$ that $h(q)=z$ for a $z\in [2^r]$ is handled analogously but this time choosing the order $\prec$ as described in the second part of Lemma~\ref{orderlemma}. The upper bound on $p$ then follows as before and for the lower bound we use $3$-independence of simple tabulation when applying the Bonferroni inequality to lower bound $\Pr[\mathcal{V}_i \mid \mathcal{E}]$.\end{proof}

\newpage

\bibliographystyle{plain}
\bibliography{bibliography}

\end{document}